\documentclass[%
 reprint,
superscriptaddress,
%groupedaddress,
%unsortedaddress,
%runinaddress,
%frontmatterverbose,
%showpacs,preprintnumbers,
%nofootinbib,
%nobibnotes,
%bibnotes,
 amsmath,amssymb,
 aps,
%pra,
%prb,
%rmp,
%prstab,
%prstper,
%floatfix,
]{revtex4-1}

\usepackage{amssymb}
\usepackage{amsthm}
\usepackage{comment}
\usepackage{array}
\usepackage{tikz}
\usepackage{setspace}
\usepackage{ragged2e}
\newcolumntype{R}[1]{>{\raggedleft\arraybackslash}p{#1}}

\tikzstyle{dashdotted}= [dash pattern=on 3pt off 1pt on \the\pgflinewidth off 1pt]
\tikzstyle{loosely dotted}=          [dash pattern=on \pgflinewidth off 4pt]
\usepackage{graphicx}% Include figure files
\usepackage{bm}% bold math
\newcommand{\ket}[1]{| #1 \rangle} % for Dirac bras
\newcommand{\bra}[1]{\langle #1 |} % for Dirac kets
\newcommand{\braket}[2]{\langle #1 \vphantom{#2} |  #2 \vphantom{#1} \rangle} % for Dirac brackets

\newcommand{\x}{x}

\newcommand{\e}{\mathrm{e}}
\usepackage{hyperref}% add hypertext capabilities
\usepackage{qcircuit}
\definecolor{mygreen}{RGB}{175,233,198}%inputs
\definecolor{mypurple}{RGB}{239,229,247}%algos
\definecolor{myyellow}{RGB}{255,238,170}%new output
\definecolor{mygray}{RGB}{240,238,228}
\definecolor{mydarkred}{RGB}{240,0,0}
\definecolor{mylightblue}{RGB}{176,224,230}

\definecolor{mylightred}{RGB}{255,204,153}
\definecolor{mylightorange}{RGB}{255,153,153}
\definecolor{mylightgray}{RGB}{224,224,224}

\theoremstyle{plain}
\newtheorem{thm}{\protect\theoremname}
\newtheorem{observ}[thm]{Observation}

\begin{document}

\title{Circuit-centric quantum classifiers}

\author{Maria Schuld}
\affiliation{Quantum Research Group, School of Chemistry and Physics, University of KwaZulu-Natal, Durban 4000, South Africa}
\affiliation{National Institute for Theoretical Physics, KwaZulu-Natal, Durban 4000, South Africa}
\affiliation{Quantum Architectures and Computation Group, Station Q, Microsoft Research, Redmond, WA (USA) }
\author{Alex Bocharov}%
\affiliation{Quantum Architectures and Computation Group, Station Q, Microsoft Research, Redmond, WA (USA) }
\author{Krysta Svore}%
\affiliation{Quantum Architectures and Computation Group, Station Q, Microsoft Research, Redmond, WA (USA) }
\author{Nathan Wiebe}%
\affiliation{Quantum Architectures and Computation Group, Station Q, Microsoft Research, Redmond, WA (USA) }

\date{\today}

\begin{abstract}

The current generation of quantum computing technologies call for quantum algorithms that require a limited number of qubits and quantum gates, and which are robust against errors. A suitable design approach are variational circuits where the parameters of gates are learnt, an approach that is particularly fruitful for applications in machine learning. In this paper, we propose a low-depth variational quantum algorithm for supervised learning. The input feature vectors are encoded into the amplitudes of a quantum system, and a quantum circuit of parametrised single and two-qubit gates together with a single-qubit measurement is used to classify the inputs. This circuit architecture ensures that the number of learnable parameters is poly-logarithmic in the input dimension. We propose a quantum-classical training scheme where the analytical gradients of the model can be estimated by running several slightly adapted versions of the variational circuit. We show with simulations that the circuit-centric quantum classifier performs well on standard classical benchmark datasets while requiring dramatically fewer parameters than other methods. We also evaluate sensitivity of the classification to state preparation and parameter noise, introduce a quantum version of dropout regularisation and provide a graphical representation of quantum gates as highly symmetric linear layers of a neural network.
\end{abstract}

%\pacs{Valid PACS appear here}
\keywords{Quantum computing, machine learning, neural networks, quantum classifier}
\maketitle

\section{\label{sec:intro} Introduction}

\setlength{\parindent}{0pt}

Quantum computing - information processing  with devices that are based on the principles of quantum theory -- is currently undergoing a transition from a purely academic discipline to an industrial technology. So called ``non-fault-tolerant'', ``small-scale'' or ``near-term'' quantum devices are being developed on a variety of hardware platforms, and offer for the first time a testbed for quantum algorithms. However, allowing for only of the order of  $1,000-10,000$ elementary operations on $50-100$ qubits \cite{mohseni17} and without the costly feature of error correction, these early devices are not yet suitable to implement the algorithms that made quantum computing famous. A new generation of quantum routines that use only very limited resources and are robust against errors has therefore been created in recent years \cite{farhi14, bremner10, huh15}. While many of those small-scale algorithms have the sole purpose of demonstrating the power of quantum computing over classical information processing \cite{harrow17}, an important goal is to find quantum solutions to useful applications.\\

One increasingly popular candidate application for near-term quantum computing is machine learning \cite{perdomo17a}. Machine learning is data-driven decision making in which a computer fits a mathematical model to data (\textit{training}) and uses the model to derive decisions (\textit{inference}). Numerous quantum algorithms for machine learning have been proposed in the past years \cite{schuld15qml, biamonte17}. A prominent strategy \cite{rebentrost14, kerenedis16, wiebe12, schuld17ibm}  is to encode data into the amplitudes of a quantum state (here referred to as \textit{amplitude encoding}), and use quantum circuits to manipulate these amplitudes. Quantum algorithms that are only polynomial in the number $n$ of qubits can perform computations on $2^n$ amplitudes. If these $2^n$ amplitudes are used to encode the data, one can therefore process data inputs in polylogarithmic time. However, most of the existing literature on amplitude encoded quantum machine learning translates known machine learning models into non-trivial quantum subroutines that lead to resource-intensive algorithms which cannot be implemented on small-scale devices. Furthermore, quantum versions of training algorithms are limited to specific, mostly convex optimisation problems. Hybrid approaches called ``variational algorithms'' \cite{mcclean16,farhi14,romero17} are much more suited to near term quantum computing and are rapidly getting popularity in the quantum research community in recent months. A general picture of variational circuits for machine learning is introduced in \cite{mitarai18}. The emphasis of low-depth circuits for quantum machine learning has been made in \cite{Verdon2017}, where the importance of entanglement as a resource has been analysed for the low-depth architectures in the context of Boltzmann machines. A very recent preprint that comes closest to the designs presented here is Farhi and Neven \cite{Farhi2018}. The latter focusses mostly on classification of discrete and discretized data that is encoded into qubits rather than amplitudes, which requires an exponentially larger number of qubits for a given input dimension. The circuit architectures proposed in the work are of more general nature compared to our focus on a slim parameter count through the systematic use of entanglement.  \\

This paper presents a quantum framework for supervised learning that makes use of the advantages of amplitude encoding, but is based on a variational approach and therefore particularly designed for small-scale quantum devices (see Fig. \ref{Fig:model_rough}). To achieve low algorithmic depth we propose a \textit{circuit-centric} design which understands a generic strongly entangling quantum circuit $U_{\theta}$ as the core of the machine learning model $f(x; \theta)=y$, where $x$ is an input, $\theta$ a set of parameters and $y$ is the prediction or output of the model. We call this circuit the \textit{model circuit}. The model circuit consists of parametrised single and controlled single qubit gates, with learnable (classical) parameters. The number of parametrised gates in the family of model circuits we propose grows only polynomially with the number of qubits, which means that our quantum machine learning algorithm has a number of parameters that is overall poly-logarithmic in the input dimension. \\

The model circuit acts on a quantum state that represents the input $x$ via amplitude encoding. To prepare such a quantum state, a static \textit{state preparation circuit} $S_x$ has to be applied to the initial ground state. After applying the state preparation as well as the model circuit, the prediction is retrieved from the measurement of a single qubit.  If the data is sufficiently low-dimensional or its structure allows for efficient approximate preparation, this yields a compact circuit that can be understood as a black box routine that executes the inference step of the machine learning algorithm on a small-scale quantum computer. \\

\begin{figure}[t]
\centering
\begin{tikzpicture}
\setlength{\baselineskip}{8pt}
\path (0,0) node[draw, fill = mypurple, rounded corners = 0.1cm, minimum height = 1cm, minimum width = 2cm, anchor=center,align = center](qd) { QPU };
\path (-2,0.25) node[anchor=east, align = center] (in){input $x$};
\path (-2,-0.25) node[anchor=east, align = center] (in2){parameters $\theta$};
\path (2,0) node[anchor=west, align = center] (out){prediction $y$};
\draw[->] (in) -- (-1,0.25);
\draw[->] (in2) -- (-1,-0.25);
\draw[->] (1,0) -- (out);
\draw(-0.8,-0.5)--(-3.3,-1.25);
\draw(0.8,-0.5)--(3.3,-1.25);
\path (0,-2.5) node[draw, fill = white!98!black, rounded corners = 0.1cm, minimum height = 2.5cm, minimum width = 7cm, anchor=center,  align = center] (out2){$
\quad $$
\Qcircuit @C=1em @R=.7em {
\lstick{\ket{0}}	&\qw & \multigate{2}{S_x} & \qw & \multigate{2}{U_{\theta}} & \qw & \meter & \cw & p(y)\\
\lstick{ \raisebox{.7em}{\vdots}}	& & \pureghost{S_x}&  & \pureghost{U_{\theta}}&\\
\lstick{\ket{0}}	&\qw & \ghost{S_x}& \qw & \ghost{U_{\theta}}& \qw \\
}$$
$};
\end{tikzpicture}
\caption{Idea of the circuit-centric quantum classifier. Inference with the model $f(x,\theta) = y$ is executed by a quantum device (the quantum processing unit or QPU) which consists of a \textit{state preparation circuit} $S_x$ encoding the input $x$ into the amplitudes of a quantum system, a \textit{model circuit} $U_{\theta}$, and a single qubit measurement. The measurement retrieves the probability of the model predicting $0$ or $1$, from which in turn the binary prediction can be inferred. The classification circuit parameters $\theta$ are learnable and can be trained by a variational scheme. }
\label{Fig:model_rough}
\end{figure}
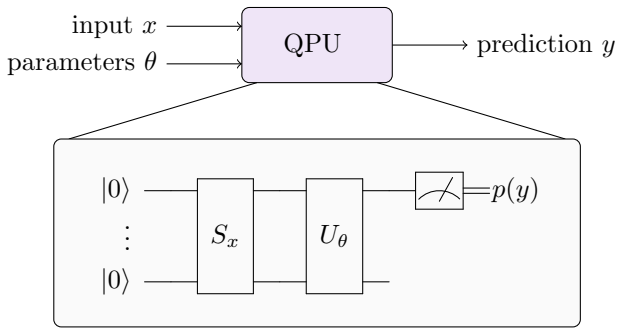

We propose a hybrid quantum-classical gradient descent training algorithm. On the analytical side we show how the exact gradients of the circuit can be retrieved from running slight variations of the inference algorithm (and for now assuming perfect precision in the prediction) a small, constant number of times and adding up the results, a strategy we call \textit{classical linear combination of unitaries}. The parameter updates are then calculated on a classical computer. Keeping the model parameters as a classical quantity allows us not only to implement a large number of iterations without worrying about growing coherence times, but also to store and reuse learnt parameters at will. Using single-batch gradient descent only requires the state preparation circuit $S_x$ to encode one input at a time. In addition to that, we can easily  improve the gradient descent scheme by standard methods such as an adaptive learning rate, regularisation and momenta.\\

We analyse the resulting \textit{circuit-centric quantum classifier} theoretically as well as via simulations to judge its performance compared to other models. We show that mathematically, a quantum circuit closely resembles a neural network architecture with unitary layers, and discuss ways to include dropout and nonlinearities. The unitarity of the ``pseudo-layers'' is a favourable property from a machine learning point of view \cite{arjovsky15, jing16}, since it maintains the length of an input vector throughout the layers and therefore circumvents notorious problems of vanishing or exploding gradients. Unitary weight matrices have also been shown to make the convergence time of gradient descent independent of the circuit depth \cite{saxe13} - an important guarantee to avoid the growing complexity of training deep architectures. Possibly the most important feature of the model is that it uses a number of parameters that is logarithmic in the data size, which is a huge saving to a neural network where the first layer already has weights at least linear in the dimension of the input. \\

In the remainder of the paper we will introduce the circuit-centric quantum classifier  in Section \ref{Sec:model}, along with design considerations for the circuit architecture in Section \ref{Sec:architecture}, as well as the training scheme in Section \ref{Sec:train}. We analyse its performance in Section \ref{Sec:sim} and show that compared with out-of-the-box methods it performs reasonable well. We finally propose a number of ways to extend the work in Section \ref{Sec:concl}.

\section{The circuit-centric quantum classifier}\label{Sec:model}

The task our model intends to solve is that of \textit{supervised pattern recognition}, and is a standard problem in machine learning with applications in image recognition, fraud detection, medical diagnosis and many other areas. To formalise the problem, let $\mathcal{X}$ be a set of inputs and $\mathcal{Y}$ a set of outputs. Given a dataset $\mathcal{D} = \{(x^1,y^1),...,(x^M,y^M)\}$ of pairs of so called \textit{training inputs} $x^m \in \mathcal{X}$ and \textit{target outputs} $y^m \in \mathcal{Y}$ for $m=1,...,M$, our goal is to predict the output $y \in \mathcal{Y}$ of a new input $x \in \mathcal{X}$. For simplicity we will assume in the following that $\mathcal{X} = \mathbb{R}^N$ and $\mathcal{Y} =\{0,1\}$, which is a binary classification task on a $N$-dimensional real input space (see Fig. \ref{Fig:problem}). Most machine learning algorithms solve this task in two steps: They first \textit{train} a model $f(x,\theta)$ with the data by adjusting a set of parameters $\theta$, and then use the trained model to \textit{infer} the prediction $y$.\\

\begin{figure}[t]
\centering
\includegraphics[width = 0.25\textwidth]{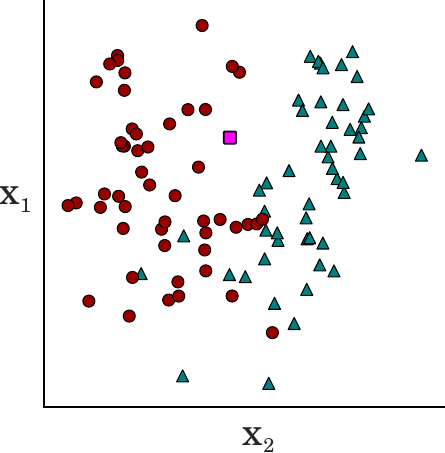}
\caption{Supervised binary classification for $2$-dimensional inputs. Given the red circle and blue triangle data points belonging to two different classes, guess the class of the new input (pink square). }
\label{Fig:problem}
\end{figure}

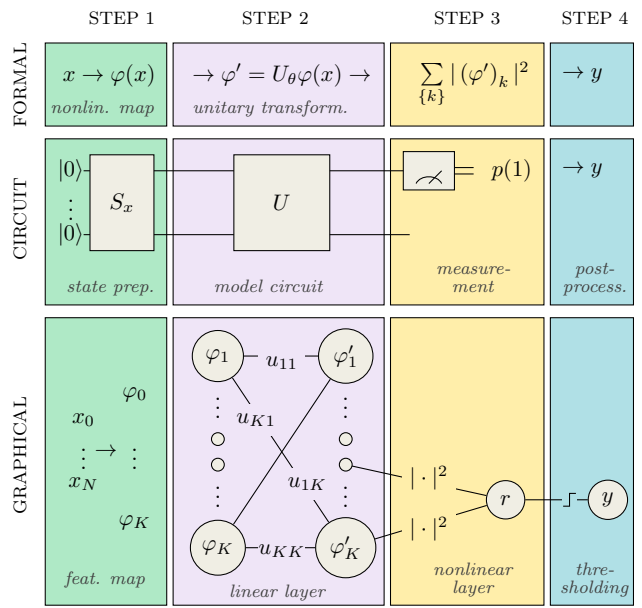
\begin{figure}[t]
\begin{tikzpicture}[every node/.style={inner sep=0,outer sep=0}, scale=0.85, every node/.style={scale=0.85}]
\setlength{\baselineskip}{8pt}
\path (0.1,3.8) node[anchor=base,align = center] {\footnotesize{STEP} \footnotesize{1 } };
\path (2.5,3.8) node[anchor=base,align = center] {\footnotesize{STEP} \footnotesize{2 } };
\path (5.5,3.8) node[anchor=base,align = center] {\footnotesize{STEP} \footnotesize{3 } };
\path (7.5,3.8) node[anchor=base,align = center] {\footnotesize{STEP} \footnotesize{4 } };

\path (-1.5,0.8) node[anchor=base,rotate=90] {\footnotesize{CIRCUIT } };
\filldraw[mygreen, draw = black] (-1.2,2) -- (0.7,2) -- (0.7,-0.6) -- (-1.2,-0.6)--cycle ;
\filldraw[mypurple, draw = black] (4.1,2) -- (0.8,2) -- (0.8,-0.6) -- (4.1,-0.6)--cycle ;
\filldraw[myyellow, draw = black] (4.2,2) -- (6.6,2) -- (6.6,-0.6) -- (4.2,-0.6)--cycle ;
\filldraw[mylightblue, draw = black] (6.7,2) -- (8,2) -- (8,-0.6) -- (6.7,-0.6)--cycle ;
\path (-0.8,1.5) node[anchor=center] (x1) {$\ket{0}$};
\path (-0.8,1) node[anchor=center] (xdots) {$\vdots$};
\path (-0.8,0.5) node[anchor=center] (xn) {$\ket{0}$};
\path (-0.1,-0.4) node[anchor=base, align = center] {\textcolor{darkgray}{\footnotesize{\textit{state prep.}}} };
\path (2.3,-0.4) node[anchor=base] {\textcolor{darkgray}{\textit{\footnotesize{model circuit}}} };
\path (5.5,-0.4) node[anchor=base, align = center] {\textcolor{darkgray}{\footnotesize{\textit{measure-}}} \\ \textcolor{darkgray}{\textit{\footnotesize{ment}}}  };
\path (7.4,-0.4) node[anchor=base, align = center] {\textcolor{darkgray}{\footnotesize{\textit{post-}}} \\ \textcolor{darkgray}{\textit{\footnotesize{process.}}}  };
\path (7.2,1.5) node[align =center] {$ \rightarrow y$};
\draw (-0.6,1.5)--(4.5,1.5) ;
\draw (-0.6,0.5)--(4.5,0.5) ;
\path (-0,1) node[align = center, fill =mygray, draw = black, minimum height = 1.5cm, minimum width = 1.cm] {$S_x$};
\path (2.5,1) node[align = center, fill =mygray, draw = black, minimum height = 1.5cm, minimum width = 1.5cm] {$U$};
\draw (4.5,1.55)--(5.5,1.55);
\draw (4.5,1.45)--(5.5,1.45);
\path (4.8,1.5) node[align = center, fill =mygray, draw = black, minimum height = 0.6cm, minimum width = 0.8cm] {};
\draw (4.6,1.3) to [bend left = 65] (5,1.3) ;
\draw (4.8,1.3) -- (5,1.5) ;
\path (6.1,1.5) node[] (xn) {$p(1)$};

\path (-1.5,2.9) node[anchor=base,rotate=90] {\footnotesize{FORMAL } };
\filldraw[mygreen, draw = black] (-1.2,3.5) -- (0.7,3.5) -- (0.7,2.2) -- (-1.2,2.2)--cycle ;
\filldraw[mypurple, draw = black] (4.1,3.5) -- (0.8,3.5) -- (0.8,2.2) -- (4.1,2.2)--cycle ;
\filldraw[myyellow, draw = black] (4.2,3.5) -- (6.6,3.5) -- (6.6,2.2) -- (4.2,2.2)--cycle ;
\filldraw[mylightblue, draw = black] (6.7,3.5) -- (8,3.5) -- (8,2.2) -- (6.7,2.2)--cycle ;
\path (-0.2,3) node[align =center] {$x \rightarrow \varphi(x)$};
\path (2.5,3) node[align =center] {$ \rightarrow \varphi'= U_{\theta}\varphi(x)  \rightarrow$};
\path (5.5,2.85) node[align =center] {$ \sum\limits_{\{k\}} |\left(\varphi'\right)_k|^2$};
\path (-0.3,2.4) node[anchor=base] {\textcolor{darkgray}{\footnotesize{\textit{nonlin. map}}} };
\path (2.4,2.4) node[anchor=base] {\textcolor{darkgray}{\textit{\footnotesize{unitary transform.}} }};
%\path (5.5,2.4) node[anchor=base] {\textcolor{darkgray}{\footnotesize{$[0,1] $} }};
%\path (7.4,2.4) node[anchor=base] {\textcolor{darkgray}{\footnotesize{$result$} }};
\path (7.2,3) node[align =center] {$ \rightarrow y$};

\path (-1.5,-2.9) node[anchor=base,rotate=90] {\footnotesize{GRAPHICAL } };
\filldraw[mygreen, draw = black] (-1.2,-0.8) -- (0.7,-0.8) -- (0.7,-5.3) -- (-1.2,-5.3)--cycle ;
\filldraw[mypurple, draw = black] (4.1,-0.8) -- (0.8,-0.8) -- (0.8,-5.3) -- (4.1,-5.3)--cycle ;
\filldraw[myyellow, draw = black] (4.2,-0.8) -- (6.6,-0.8) -- (6.6,-5.3) -- (4.2,-5.3)--cycle ;
\filldraw[mylightblue, draw = black] (6.7,-0.8) -- (8,-0.8) -- (8,-5.3) -- (6.7,-5.3)--cycle ;
\path (-0.2,-4.9) node[align =center] {\textcolor{darkgray}{\textit{\footnotesize{feat. map}}} };
\path (2.5,-5.1) node[align =center] {\textcolor{darkgray}{\footnotesize{\textit{linear layer}}} };
\path (5.5,-4.9) node[align =center] {\textcolor{darkgray}{\footnotesize{\textit{nonlinear}}} \\ \textcolor{darkgray}{\footnotesize{\textit{layer}}} };
\path (7.4,-4.9) node[align =center] {\textcolor{darkgray}{\textit{\footnotesize{thre-}}} \\ \textcolor{darkgray}{\footnotesize{\textit{sholding}}} };
\path (-0.6,-2.4) node[anchor=center] (x1) {$x_{0}$};
\path (-0.6,-2.9) node[anchor=center] (xdots) {$\vdots$};
\path (-0.6,-3.4) node[anchor=center] (xn) {$x_{N}$};
\path (-0.25,-2.9) node[anchor=center]  {$\rightarrow$};
\path (0.2,-2.) node[anchor=center] (x1) {$\varphi_{0}$};
\path (0.2,-2.9) node[anchor=center] (xdots) {$\vdots$};
\path (0.2,-4) node[anchor=center] (xn) {$\varphi_{K}$};

\path (1.5,-1.4) node[draw, anchor=center,shape=circle,  fill = mygray, minimum size=0.8cm] (p1) {$\varphi_1$};
\path (1.5,-2.1) node[anchor=center]  {$\vdots$};
\draw[fill=mygray] (1.5, -2.7) circle (0.1cm);
\draw[fill=mygray] (1.5,-3.1) circle (0.1cm);
\path (1.5,-3.5) node[anchor=center] {$\vdots$};
\path (1.5,-4.4) node[draw, anchor=center,shape=circle,  fill = mygray, minimum size=0.8cm] (pK) {$\varphi_K$};
\path (3.5,-1.4) node[draw, anchor=center,shape=circle,  fill = mygray, minimum size=0.8cm] (pd1) {$\varphi_1'$};
\path (3.5,-3.5) node[anchor=center]  {$\vdots$};
\draw[fill=mygray] (3.5, -2.7) circle (0.1cm);
\path (3.5,-2.1) node[anchor=center]  {$\vdots$};
\path (3.5,-4.4) node[draw, anchor=center,shape=circle,  fill = mygray, minimum size=0.8cm] (pdK) {$\varphi_K'$};
\draw (p1)--(pd1);
\draw (pK)--(pdK);
\draw (p1)--(pdK);
\draw (pK)--(pd1);
\path (2.5,-1.5) node[anchor=center, fill=mypurple,minimum height = 0.5cm, minimum width = 0.6cm, align=center] (u1) {$u_{11}$};
\path (2.1, -2.4 ) node[anchor=center, fill=mypurple, minimum height = 0.5cm, minimum width = 0.5cm,align=center] (ukk) {$u_{K1}$};
\path (2.9,-3.4) node[anchor=center, fill=mypurple, minimum height = 0.5cm, minimum width = 0.5cm,align=center] (u1k) {$u_{1K}$};
\path (2.5,-4.5) node[anchor=base, fill =mypurple,  align=center, inner sep=0,outer sep=0] (ukk) {$u_{KK}$};
\path (6,-3.65) node[draw,shape=circle,  fill = mygray, minimum size=0.6cm] (r) {$r$};
\draw (3.5,-3.1)--(r);
\draw[fill=mygray] (3.5,-3.1) circle (0.1cm);
\draw (pdK)--(r);
\path (4.8,-3.3) node[anchor=center, fill=myyellow, align=center] (y1) {$|\cdot|^2$};
\path (4.8,-4.1) node[anchor=center, fill=myyellow, align=center] (y2) {$|\cdot|^2$};
%\draw[->] (6.5,1.5)--(6.9,1.5);
\path (7.6,-3.65) node[draw,shape=circle,  fill = mygray, minimum size=0.6cm] (y) {$y$};
\draw (r)--(y);
\path (7,-3.75) node [minimum width =0.2cm, fill=mylightblue]  {} ;
\draw (6.9,-3.75) -- (7.0,-3.75) -- (7.0,-3.55)-- (7.1,-3.55)  ;
\end{tikzpicture}
\caption{Inference with the circuit-centric quantum classifier consists of four steps, here displayed in four colours, and can be viewed from three different perspectives, i.e. from a formal mathematical framework, a quantum circuit framework and a graphical neural network framework. In the first step, the feature map from the input space to the feature space $\mathbb{R}^N \rightarrow \mathbb{R}^K$ is executed for an input by a state preparation scheme. The quantum circuit applies a unitary transformation to the feature vector which can be understood as one linear layer (or, when decomposed into gates,  several linear layers) of a neural network. The measurement statistics of the first qubit are interpreted as the continuous output of the classifier and effectively implement a weightless nonlinear layer in which every component of the last half of all units is mapped by an absolute square and summed up. The postprocessing stage binarises the result with a thresholding function via classical computing.   }
\label{Fig:model}
\end{figure}

The main idea of the circuit-centric design is to turn a generic quantum circuit of single and $2$-qubit quantum gates into a model for classification. One can divide the full inference algorithm into four steps. As shown in Fig. \ref{Fig:model}, these four steps can be described using the language of quantum circuits, but also as a formal mathematical model, and finally, using the idea of graphical representation for neural networks, as a graphical model. \\

From a quantum circuit point of view we use the state preparation circuit $S_x$ to encode the data into the state of a $n$ qubit quantum system, which effectively maps an input $x \in \mathbb{R}^N$ to the $2^n$-dimensional amplitude vector  $\varphi(x)$ that describes the initial quantum state $\ket{\varphi(x)}$. Second, the model circuit $U_{\theta}$ is applied to the quantum state. Third, the prediction is read out from the final state $\ket{\varphi'} = U_{\theta} \ket{\varphi(x)}$ . For this purpose we measure the first of the $n$ qubits. Repeated applications of the overall circuit and measurements resolve the probability of measuring the qubit in state $1$. Lastly, the result is postprocessed by adding a learnable bias parameter $b$ and mapping the result through a step function to the output $y \in \{0,1\}$.\\

From a purely mathematical point of view, this procedure (that is, if we could perfectly resolve the probability of the first qubit by measurements) formally defines a classifier that takes decisions according to
\begin{equation}
f( x ;\theta, b) = \begin{cases} 1 \qquad \mathrm{if} \; \sum\limits^{2^n}_{k=2^{n-1}+1} \left| \left( U_{\theta} \; \varphi(x) \right)_k \right|^2 +b  > 0.5, \\
                           0  \qquad \mathrm{else}. \end{cases}.
\label{Eq:classifier}
\end{equation}
Here $\varphi: \mathbb{R}^N \rightarrow \mathbb{C}^{2^n} $ is a map that describes the procedure of information encoding via the state preparation routine ($n$ is an integer such that $2^n \geq N$), $U_{\theta} $ is the parametrised unitary matrix describing the model circuit, and $(U_\theta \varphi(x))_k$ is the $k$th entry of the result after we applied this matrix to $\varphi(x)$. The sum over the second half of the resulting vector corresponds to the single qubit measurement resulting in state $1$. Postprocessing adds the bias $b$ and thresholds to compute a binary prediction. \\

Lastly, if we formulate the four steps in the language of neural networks and their graphical representation, state preparation corresponds to a feature map on the input space, while the unitary circuit resembles a neural network of several parametrised linear layers. This is followed by two nonlinear layers, one simulating the read-out via measurement (adding the squares of some units from the previous layer) and one that maps the output to the final binary decision. We will go through the four different steps in more detail and discuss our specific design decisions for the model.

\subsection{State preparation}\label{Sec:stateprep}

There are various strategies to encode input vectors into the $n$-qubit system of a quantum computer. In the most general terms, state preparation implements a feature map $\varphi: \mathbb{R}^N \rightarrow \mathbb{C}^{2^n} $ where $n$ is the total number of qubits used to represent the features. In the following we focus on \textit{amplitude encoding}, where an input vector $x \in \mathbb{R}^N$ -- possibly with some further preprocessing to bring it into a suitable form -- is directly associated with the amplitudes of the $2^n$-dimensional `ket' vector of the quantum system written in the computational basis. This option can be extended by preparing a set of copies of the initial quantum state, which effectively implements a tensor product of copies of the input, mapping it to much higher dimensional spaces. \\

To directly associate an amplitude vector in computational basis with a data input, we require that $N$ is a power of $2$ (so that we can use all $2^n$ amplitudes of a $n$-qubit system), and that the input is normalised to unit length, $x^Tx = 1$. If $N$ is no power of $2$, we can `pad' the original input with a suitable number of zero features. (For example $x = (x_1, x_2, x_3)^T$ would be extended to $x' = (x_1,x_2, x_3, 0)^T $ ). Normalisation can pose a bigger challenge. Although many datasets carry proximity relations between vectors in their angles and not their length, some data sets can become significantly distorted by normalisation. A possible solution is to embed the data in a higher dimensional space. Practically, this can be achieved by adding non-zero padding terms before normalization. Let $N$ be the dimensionality of the original feature space, and let $c_1,\ldots, c_D$ be the padding terms that may in general depend on the informative features $x_1$ to $x_N$. The preprocessing necessary for amplitude encoding maps
\begin{equation} \label{eq:feature:normalization}
\begin{pmatrix} x_1,& ..., & x_N \end{pmatrix}^T \rightarrow  \chi \, \begin{pmatrix} x_1,&...,&  x_N, &    c_1, &..., &  c_D \end{pmatrix}^T,
\end{equation}
with
\[\chi = \frac{1}{\sqrt{\sum_j{x_j^2}+\sum_k{|c_k|^2}}},\]
on the original data.\\

For the designs investigated in this paper it is convenient to choose the padding width $D$ such that $N' = N+D$ is some exact power of $2$, and to choose  $\{c_1, \ldots,  c_D\}$ as a set of non-informative constants. This choice has in fact two desirable side-effects of feature normalisation (\ref{eq:feature:normalization}): first, is creates an `ancillary' space of dimension $D$, which in the language of neural networks is analogous to having more ``nodes'' in the first hidden layer than in the input layer; second, the constants create a state vector that is not homogeneous with respect to the vector of the original features (the importance of this will appear shortly in the discussion of the tensorial maps).\\

Preparing a quantum state whose amplitude vector in the computational basis is equivalent to the pre-processed input $\varphi(x)$ can always be done with a circuit that is linear in the number of features in the input vector, for example with the routines in Refs \cite{mottonen04,knill95,plesch11}. When more structure in the data can be exploited, preparation routines with polylogarithmic dependence on the number of features might be applicable \cite{grover02,soklakov06}. A largely uninvestigated option is also approximate state preparation of feature vectors, which may reduce the resources needed for the circuit $S_x$ at the expense of an error in the inputs. \\

To map input data into vastly higher dimensional spaces we can  apply a \textit{tensorial feature map} by preparing $d$ copies of the state \footnote{These can be exact copies when direct classical access to data is available. Otherwise these could be approximate clones of the data point generated at sufficient fidelity. }. If $\ket{\psi}$ is the `ket' vector produced by amplitude encoding, this prepares
\[ \ket{\psi} \rightarrow \underbrace{\ket{\psi} \otimes \hdots \otimes \ket{\psi} }_{d \; \mathrm{times}}.\]
For amplitude encoding with $N=2$ and $d=2$, and without any of the preprocessing described above, this would map a feature vector $(x_1,x_2)^T$ to
\[  \begin{pmatrix} x_1\\ x_2 \end{pmatrix} \otimes \begin{pmatrix} x_1\\ x_2 \end{pmatrix} = \begin{pmatrix} x_1^2 \\ x_1x_2\\ x_2x_1 \\x_2^2 \end{pmatrix}, \]
and can give rise to interesting nonlinearities that may facilitate the classification procedure in the following steps  (see also \cite{stoudenmire16}).

\subsection{The model circuit}

Given an encoded feature vector $\varphi(x) $ which is now a `ket' vector in the Hilbert space of a $n$ qubit system, the model circuit maps this ket vector to another ket vector $\varphi' = U_{\theta} \varphi(x)$ by a unitary operation $U_{\theta}$ which is parametrised by a set of variables $\theta$.

\subsubsection{Decomposition into (controlled) single qubit gates}

As described before, we decompose $U$ into
\begin{equation} \label{eq:ThaCircuit}
U = U_L \hdots U_{\ell} \hdots U_1,
\end{equation}
where each $U_{\ell}$ is either a single qubit or a two-qubit quantum gate. As a reminder, a single qubit gate $G_{k}$ acting on the $k$th of $n$ qubits can be expressed as
\begin{equation} U_l = \mathbb{I}_{0} \otimes \cdots \otimes G_{k} \otimes \cdots \otimes \mathbb{I}_{{n-1}} . \label{Eq:Ui} \end{equation}
 If the circuit depth $L$ is in $\Omega(4^n)$, this decomposition allows us to represent general unitary transformations. Remember that unitary operators are linear transformations that preserve the length of a vector, a fact that holds a number of advantages for the classifier as we will discuss later.\\

We further restrict the type of $2$-qubit gate to simplify our ``elementary parametrised gate set''. A $2$-qubit unitary gate is called \emph{imprimitive} if it can map a $2$-qubit product state into a non-product state. A common case of an imprimitive two-qubit gate is a singly-controlled single-qubit gate $C(G)$ that in standard computational basis can be written as
\begin{equation} \label{eq:controlled:unitary}
C_{a}(G_{b}) \;|x\rangle|y\rangle = |x \rangle \otimes G^{x} |y\rangle,
\end{equation}
where $G$ is a single-qubit gate other than a global phase factor on the qubit $b$ and the state $x$ of qubit $a$ is either $0$ or $1$ ($G^{0}$ is the identity). For example, $G$ could be a NOT gate, in which case the $C(G)$ is simply the frequently used CNOT gate. It is known (\cite{brylinski2002}), that single-qubit gates together with any set of imprimitive $2$-qubit gates provide for quantum universality:
\begin{observ} \label{observ:imprimitive:universality}
Circuits of the form (\ref{eq:ThaCircuit}) composed out of single-qubit gates and at least one type of imprimitive $2$-qubit gates generate the entire unitary group $U(2^n)$ in a topological sense. That is, for any $\varepsilon > 0$ and any unitary $V \in U(2^n)$ there is a circuit of the the form (\ref{eq:ThaCircuit}) the value of which is $\varepsilon$-close to $V$.
\end{observ}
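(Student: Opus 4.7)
The plan is to translate the statement into a Lie-algebra problem and use imprimitivity to promote commutators with single-qubit Hamiltonians into a full spanning set. Let $G$ denote the topological closure in $U(2^n)$ of the group generated by the gate family, and let $G_0$ be its identity component. Since the single-qubit unitaries form a continuous family, $G_0 \supseteq SU(2)^{\otimes n}$; by the closed subgroup theorem $G$ is a Lie subgroup, so it suffices to show that $\mathfrak{g} := \mathrm{Lie}(G_0) = \mathfrak{u}(2^n)$, from which $G = U(2^n)$ follows.

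Single-qubit freedom immediately places every local Hamiltonian $\mathbb{I}\otimes\cdots\otimes A_k\otimes\cdots\otimes\mathbb{I}$ into $\mathfrak{g}$. Write the imprimitive two-qubit gate on qubits $(a,b)$ as $V = e^{-iH_V}$; the closure of $\{V^k : k\in\mathbb{Z}\}$ is an abelian Lie subgroup whose Lie algebra contains $H_V$, so $H_V \in \mathfrak{g}$. Imprimitivity rules out the possibility that $H_V$ is a sum of local terms plus a scalar (otherwise $V$ would factor as $V_a\otimes V_b$ times a phase and could not entangle any product state). Hence $H_V$ has a nonzero component $H_E$ in the purely entangling subspace $\mathcal{E}_{ab} := \mathrm{span}\{\sigma_i\otimes\sigma_j : i,j\in\{x,y,z\}\}$, and since local pieces can be subtracted off using Hamiltonians already in $\mathfrak{g}$, we obtain $H_E \in \mathfrak{g}$.

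Next I would exploit the adjoint action of $SU(2)_a\times SU(2)_b \subseteq G_0$ on $\mathcal{E}_{ab}$, which is precisely the tensor product of two copies of the standard vector representation of $SO(3)$ and is irreducible as an $SO(3)\times SO(3)$-module. Therefore the linear span of the orbit of any nonzero $H_E$ is all of $\mathcal{E}_{ab}$, and taking commutators of $H_E$ with single-qubit Hamiltonians puts every $\sigma_i\otimes\sigma_j$ on the pair $(a,b)$ into $\mathfrak{g}$. In particular a SWAP-generating Hamiltonian $\sum_i\sigma_i\otimes\sigma_i$ is available, and by conjugating with local gates we can install a full set of entangling Hamiltonians on any other pair of qubits.

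Finally, iterated commutators of entangling Paulis on overlapping pairs with single-qubit Hamiltonians produce every Pauli string $\sigma_{i_1}\otimes\cdots\otimes\sigma_{i_n}$, which together with $\mathbb{I}$ form a basis of $\mathfrak{u}(2^n)$; this forces $\mathfrak{g}=\mathfrak{u}(2^n)$. I expect the main obstacle to be the extraction step: one must argue carefully that the closure of the cyclic group $\{V^k\}$ really does contribute $H_V$ itself to $\mathfrak{g}$ (hinging on $\{V^k\}$ generating, in closure, a one-parameter subgroup whose generator is proportional to $H_V$), and that imprimitivity defined via product states correctly matches the algebraic condition on $H_V$ used above. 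Once that is settled, the $SO(3)\times SO(3)$ irreducibility cleanly converts the weak hypothesis \emph{imprimitive} into the strong conclusion \emph{universal}.
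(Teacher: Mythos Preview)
The paper does not actually prove this observation: it is stated as a known fact with a citation to Brylinski--Brylinski. Your Lie-algebra strategy is essentially the one used in that reference, so you are on the right track, but there is no ``paper's proof'' to compare against beyond the citation.

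The step you flag as ``the main obstacle'' is a real gap, and your proposed resolution does not work. The closure of $\{V^k\}$ need not have $H_V$ in its Lie algebra: for $V=\mathrm{CNOT}$ one has $V^2=I$, so the closure is finite with Lie algebra $\{0\}$; even for infinite-order $V$, if $H_V$ has eigenvalues $\pi$ and $\sqrt{2}\,\pi$ on a two-dimensional block then $\overline{\{V^k\}}\cong\{\pm1\}\times S^1$ and $H_V$ is not tangent to it. The standard repair bypasses $H_V$ entirely. Since the identity component $G_0$ is normal in $G$, conjugation by $V$ preserves $\mathfrak{g}=\mathrm{Lie}(G_0)$, so $VXV^{-1}\in\mathfrak{g}$ for every local Hamiltonian $X$ already in $\mathfrak{g}$. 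Imprimitivity is precisely the statement that $V$ does \emph{not} normalise $\mathfrak{su}(2)_a\oplus\mathfrak{su}(2)_b$: the normaliser of that subalgebra in $U(4)$ consists of local unitaries, SWAP, and global phases, all of which preserve product states. Hence some $VXV^{-1}$ has a nonzero component in your entangling space $\mathcal{E}_{ab}$, and from that seed your $SO(3)\times SO(3)$-irreducibility argument and the commutator bootstrap go through as written. One further wrinkle: conjugation by local gates cannot transport an entangling Hamiltonian from the pair $(a,b)$ to a disjoint pair, so your SWAP remark does not do what you want; instead invoke the standing assumption that the imprimitive gate may be applied between any two qubits, which furnishes a seed on every pair directly.
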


To make the single qubit gates trainable we need to formulate them in terms of parameters that can be learnt. The way the parametrisation is defined can have a significant impact on training, since it defines the shape of the cost function. A single qubit gate $G$ is a $2 \times 2$ unitary, which can always be written \cite{barenco95} as
\begin{equation} \label{eq:1q:parametrization}
G (\alpha,\beta, \gamma, \phi ) = \e^{i\phi} \begin{pmatrix} \e^{i\beta} \cos \alpha &  \e^{i\gamma} \sin \alpha\\ -\e^{-i\gamma} \sin \alpha &  \e^{-i\beta} \cos \alpha \end{pmatrix}
\end{equation}
and is fully defined by four parameters $\{\alpha,\beta, \gamma, \phi\}$. For quantum gates -- where we cannot physically measure overall phase factors -- we may neglect the prefactor $\e^{i\phi}$ and only consider three learnable parameters per gate. The advantage in using angles (instead of, for example, a parametrisation with Pauli matrices) is that training does not need an additional condition on the model parameters. A disadvantage might unfavourable convergence properties of trigonometric functions close to their optima. \\

Note that there may be much more efficient ``elementary parametrised gatesets'' for a specific hardware, since some single qubit gates might naturally be parametrised in the device (i.e. where the parameter corresponds to the intensity of a laser pulse). For the agnostic case we consider here, every parametrised gate has to be decomposed into the constant elementary gate set of the physical device, which adds an efficient overhead per gate that depends on the fidelity with which we seek to approximate it (see Section \ref{Sec:performance}). \\

We treat the circuit \textit{architecture}, i.e. which qubit a certain gate acts on and where to place the controls, as fixed here and will discuss design choices in Section \ref{Sec:architecture}. Of course, strategies to learn the circuit architecture are also worth investigating, but we expect this to be a nontrivial problem due to the vast impact that each gate choice in the architecture bears for the final state (see \cite{boixo16} and the discussion on ``quantum chaos'' in random circuits).

\subsection{Read out and postprocessing} \label{subsec:readout:processing}

\begin{figure*}[t]
\begin{center}
\begin{minipage}{10cm}
\Qcircuit @C=0.4em @R=.6em {
& &  & & & B_1& & &&& &&&  & &  &B_3& &  &  & &    &  &  \\
& &  & & & && &&& &&&  & &  && &  &  & &    &  &  \\
\lstick{\ket{q_0}} & \gate{G}& \ctrl{7} & \qw &\qw &\qw &\qw &\qw &\qw & \gate{G} & \qw & \qw &\gate{G} & \ctrl{5} & \qw & \qw & \qw &\qw &\qw &\qw  &  \gate{G} &     \qw &\gate{G} & \meter \gategroup{3}{2}{10}{10}{1.em}{--}   \gategroup{3}{13}{10}{21}{1.0em}{--}   \\
\lstick{\ket{q_1}}& \gate{G}  &\qw &\qw &\qw &\qw &\qw &\qw &\gate{G} &\ctrl{-1} & \qw & \qw & \gate{G} & \qw & \qw &\qw &\qw &\gate{G} & \ctrl{5}  &\qw & \qw &  \qw\\
\lstick{\ket{q_2}} & \gate{G}&\qw &\qw &\qw &\qw &\qw &\gate{G} &\ctrl{-1} &\qw &\qw & \qw &\gate{G} & \qw & \gate{G} &  \ctrl{5}   & \qw  &\qw &\qw &\qw  &  \qw & \qw\\
\lstick{\ket{q_3}}& \gate{G} &\qw &\qw &\qw &\qw &\gate{G} &\ctrl{-1} &\qw &\qw &\qw & \qw &\gate{G} & \qw & \qw &\qw &\qw &\qw &\qw & \gate{G}& \ctrl{-3}  & \qw \\
\lstick{\ket{q_4}}& \gate{G}& \qw &\qw &\qw &\gate{G} &\ctrl{-1} &\qw &\qw &\qw &\qw & \qw &\gate{G} &\qw &\qw &\qw &\gate{G}  &  \ctrl{-3} &\qw  & \qw  & \qw &  \qw\\
\lstick{\ket{q_5}}& \gate{G}  &\qw & \qw &\gate{G} &\ctrl{-1} &\qw &\qw &\qw &\qw &\qw & \qw &\gate{G} & \gate{G}& \ctrl{-3}   & \qw & \qw &\qw & \qw &\qw   & \qw &  \qw \\
\lstick{\ket{q_6}}& \gate{G} &\qw &\gate{G} &\ctrl{-1} &\qw &\qw &\qw &\qw &\qw &\qw & \qw &\gate{G}  &\qw & \qw &\qw &\qw &\qw  &\gate{G} &\ctrl{-3}  &  \qw &  \qw\\
\lstick{\ket{q_7}}& \gate{G}  &\gate{G} &\ctrl{-1} &\qw &\qw &\qw &\qw &\qw &\qw & \qw  &  \qw &\gate{G} & \qw & \qw &\gate{G}  & \ctrl{-3}  & \qw  &\qw  &\qw  &\qw & \qw\\
}
\end{minipage}
\end{center}
\caption{Generic model circuit architecture for $8$ qubits. The circuit consists of two `code blocks' $B_1$ and $B_3$ with a range of controls of $r = 1 $ and $r=3$ respectively. The circuit consists of $17$ trainable single-qubit gates $G = G(\alpha, \beta, \gamma)$, as well as $16$ trainable controlled single qubit gates $C(G)$, which have in turn to be decomposed into the elementary constant gate set used by the quantum computer on which to implement it. If the optimisation methods are used to reduce the controlled gates to a single parameter, we have $3\cdot 33 + 1 = 100$ parameters to learn in total for this model circuit. These $100$ parameters are used to classify inputs of $2^8 = 256$ dimensions, which shows that the circuit-centric classifier is a much more compact model than a conventional feed-forward neural network. }
\label{Fig:maxent_circuit}
\end{figure*}
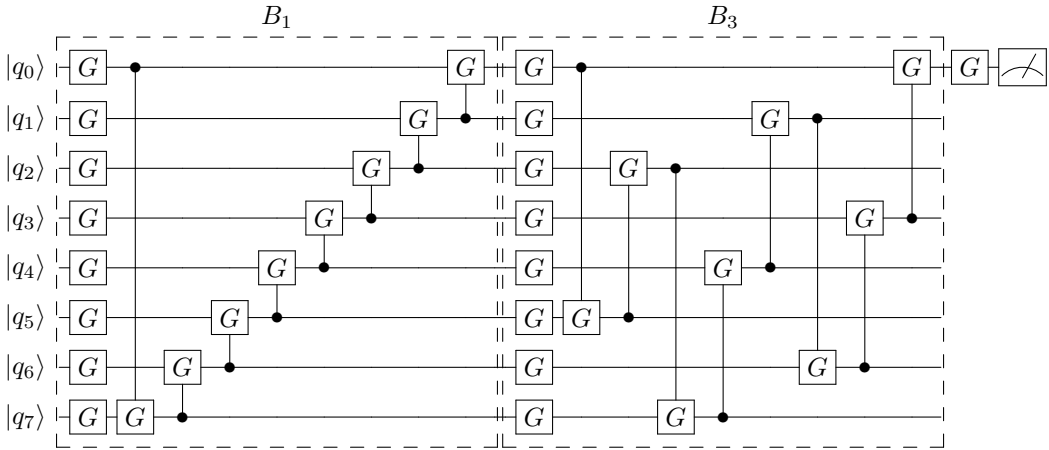

After executing the quantum circuit $U_{\theta} \varphi(x)$ in Step 2, the measurement of the first qubit (Step 3) results in state $1$  with probability\footnote{According to the laws of quantum mechanics we have to sum over the absolute square values of the amplitudes that correspond to basis states where the first qubit is in state $1$. Using the standard computational basis, this is exactly the `second half' of the amplitude vector, ranging from entry $2^{n-1}+1$ to $2^n$. }
\[  p(q_0=1, x; \theta)   = \sum\limits^{2^n}_{k=2^{n-1}+1} \left| (U_{\theta} \varphi(x))_k \right|^2 .\]
To resolve these statistics we have to run the entire circuit $S$ times and measure the first qubit. We estimate $p(q_0=1)$ from these samples $s_1,...,s_S$. This is a Bernoulli parameter estimation problem which we discuss in Section \ref{Sec:performance}.\\

The classical postprocessing (Step 4) consists of adding a learnable bias term $b$ to produce the continuous output of the model,
\begin{equation}
\pi(x;\theta, b) = p(q_0=1, x, \theta) +b.
\label{Eq:contin_out}
\end{equation}
Thresholding the value finally yields the binary output that is the overall prediction of the model:
\[ f(x ;\theta) = \begin{cases} 1 \; \mathrm{if} \; \pi(x;\theta)  > 0.5 \\
                           0  \; \mathrm{else} \end{cases}.\]
In Dirac notation the measurement result can be written as the expectation value of a $\sigma_z$ operator acting on the first qubit, measured after applying $U$ to the initial state $\ket{\varphi(x)}$.
In absence of non-linear activation, the expectation value of the $\sigma_z$ operator on the subspace of the first qubit is given by
\[ \mathbb{E} (\sigma_z) = \bra{\varphi(x)} U^{\dagger} (\sigma_z \otimes \mathbb{I} \otimes \hdots \otimes \mathbb{I} ) U \ket{\varphi(x)},  \]
and we can retrieve the continuous output via
\begin{equation} \pi(x;\theta) =  \left( \frac{\mathbb{E} (\sigma_z)}{2}    + \frac{1}{2}\right) + b. \label{Eq:dirac_model} \end{equation}

\section{Circuit architectures}\label{Sec:architecture}

Our initial goal was to build a classifier that at its core has a low-depth quantum circuit. With the circuit decomposed into $L$ single or controlled single qubit gates, we therefore want to constrain $L$ to be polynomial in $n$ which will allow us to do inference with a number of elementary quantum operations that grows only polylogarithmically in the dimension of the data set. However, this obviously comes at a price. The vectors of the form $U_{\theta } \ket{0...0}$ exhaust only a small subset of the Hilbert space of $n$ qubits. In other words, the set of amplitude vectors $\varphi' = U_{\theta} \varphi(x)$ that the circuit can `reach' is limited. In machine learning terms, this limits the flexibility of the classifier. Much like in classical machine learning, the challenge of finding a generic circuit architecture is therefore to engineer circuits (\ref{eq:ThaCircuit}) of polynomial depth that still create powerful classifiers for a subclass of datasets.

\subsection{Strongly entangling circuits}

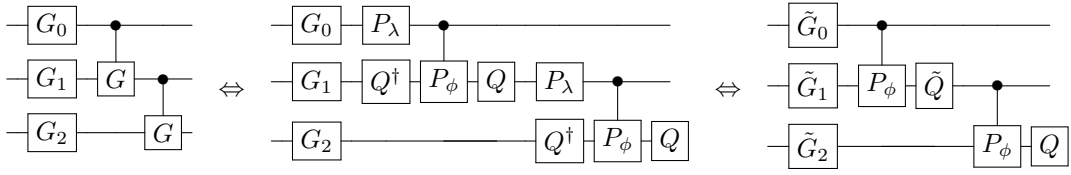
\begin{figure*}[t]
$$\Qcircuit @C=0.4em @R=.6em {
&\qw & \gate{G_0} & \qw & \ctrl{1} & \qw & \qw \\
&\qw & \gate{G_1} & \qw & \gate{G} & \ctrl{1} & \qw\\
&\qw & \gate{G_2} & \qw & \qw & \gate{G} & \qw\\
}
\quad
\raisebox{-0.85cm}{$\Leftrightarrow$}
\quad
\Qcircuit @C=0.4em @R=.6em {
&\qw & \gate{G_0} & \qw & \gate{P_{\lambda}} & \ctrl{1} &\qw & \qw & \qw & \qw & \qw \\
&\qw & \gate{G_1} & \qw &\gate{Q^{\dagger}} &\gate{P_{\phi}} &  \gate{Q} & \qw & \gate{P_{\lambda}} & \ctrl{1} & \qw\\
&\qw & \gate{G_2} & \qw & \qw & \qw & \qw \qw & \qw  & \gate{Q^{\dagger}}& \gate{P_{\phi}} & \gate{Q}\\
}
\quad
\raisebox{-0.85cm}{$\Leftrightarrow$}
\quad
\Qcircuit @C=0.4em @R=.6em {
&\qw & \gate{\tilde{G}_0} & \qw & \ctrl{1} &\qw & \qw & \qw &  \qw \\
&\qw & \gate{\tilde{G}_1} & \qw &\gate{P_{\phi}} &  \gate{\tilde{Q}} & \qw  & \ctrl{1} & \qw\\
&\qw & \gate{\tilde{G}_2} & \qw & \qw & \qw \qw & \qw  &  \gate{P_{\phi}} & \gate{Q}\\
}$$
\caption{Illustration of first step of the proof from Observation \ref{Obs:optimisation} for an example of the first $5$ gates of a codeblock of $3$ qubits with range $r=1$. Decomposing the controlled rotations and merging single qubit gates reduces the number of parameters needed to represent the model circuit architecture. For simplification the gates are displayed without indices or parameters.}
\label{Fig:optimisation}
\end{figure*}

A natural approach to the problem of circuit design is to consider circuits that prepare strongly entangled quantum states. For one, such circuits can reach `wide corners of the Hilbert space' with $U_{\theta }\,\ket{0,...,0}$. Reversibly argued, they have a better chance to project input data state $\ket{\varphi(x)}$ with the class label $y$ onto the subspace $|y\rangle \otimes |\eta\rangle$, $\eta \in \mathbb{C}^{2^{n-1}}$, which corresponds to a decision of $p(q_0) = 0,1$ in our classifier (for a zero bias). Moreover, from a theoretical point of view a classifier has to capture both short and long-range correlations in the input data, and there is mounting evidence \cite{arjovsky15, levine17} that shallow circuits may be suitable for the purpose when they are strongly entangling. \\

More specifically, we compose the circuit (\ref{eq:ThaCircuit}) out of several \emph{code blocks} $B$ (see dotted boxes in the example in Figure \ref{Fig:maxent_circuit}). A code block consists of a layer of single qubit gates $G=G(\alpha, \beta, \gamma)$ applied to each of the $n$ qubits, followed by a layer of $n/\mathrm{gcd}(n,r)$ controlled gates, where $r$ is the `range' of the control and $\mathrm{gcd}(n,r)$ is the greatest common denominator of $n$ and $r$. For $j \in [1..n/\mathrm{gcd}(n,r)]$ the $j$th $2$-qubit gate $C_{c_j}(G_{t_j})$ of a block has qubit number $t_j= (jr- r) \mod n$ as the target, qubit number $c_j = j r \mod n$ as control. A full block has the following composition,
\begin{equation} \label{eq:code:block}
B =\prod_{k=0}^{n-1} C_{c_k}(G_{t_k} ) \; \;\prod_{j=0}^{n-1} G_j .
\end{equation}
We observe that such code block is capable of entangling/unentangling all the qubits with numbers that are a multiple of $\mathrm{gcd}(n,r)$. In particular, assuming $r$ is relatively prime with $n$, all $n$ qubits can be entangled/unentangled.\\

As an example that demonstrates the entangling power of the circuit, select a block with $n=4$, $r=1$. Let all controlled gates be CNOTs and let all single qubit gates be identities, except from $G_0 = G_2 = H$, which are Hadamard gates. Applying the circuit to the basis product state $|0000\rangle$ we get the state
\[ |\psi\rangle = \frac{1}{2} (|00\rangle|00\rangle + |01\rangle|11\rangle + |10\rangle|01\rangle+|11\rangle|10\rangle).\]
If $A$ is the subsystem consisting of qubits $0,1$ and $B$ the subsystem of qubits $2,3$, then the marginal density matrix, corresponding to the state $|\psi\rangle$ and the partitioning $A \otimes B$, is completely mixed. Therefore the state $|\psi\rangle$ strongly entangles the two subsystems.

\subsection{Optimising the architecture}

The definition of the code block as per Equation (\ref{eq:code:block}) is fairly redundant. It turns out that the parameter space of the circuit (\ref{eq:code:block}) can for practical purposes be reduced to roughly $5\,n$ parameters. For this we need to introduce a controlled phase gate $C_{j}(P_{k}(\phi)), \phi \in \mathbb{R}$ that applies the phase shift $e^{i \,\phi}$ to a standard basis vector if and only if both the $j$-th and $k$-th
qubits are in state $|1\rangle$.  (Note the symmetry of the definition, which means that it does not matter which of the qubits is the control and which is the target.)

\begin{observ}\label{Obs:optimisation}
A circuit block of the form (\ref{eq:code:block}) can, up to global phase, be uniquely rewritten as
\begin{equation} \label{eq:code:block:normalized}
B = \prod_{k=0}^{n-1} R^X_k C_{c_k}(P_{t_k}) \prod_{j=0}^{n-1} G_j.
\end{equation}
where $\forall j, G_j \in SU(2)$ are single qubit gates with the usual three parameters (and, moreover, $G_j$ is an axial rotation for $j>0$), $P$ is a single-parameter phase gate, and $R^X$ is a single-parameter $X$-rotation.
\end{observ}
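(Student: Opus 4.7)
The plan is a three-stage reduction: spectral decomposition of each controlled gate, merging of the resulting pieces into adjacent single-qubit gates using the chain structure of the block, and an Euler extraction that isolates a single axial rotation between consecutive controlled phases.

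\textbf{Step 1 (Spectral decomposition).} Every $G\in U(2)$ factors as $G=e^{i\lambda}\,Q^\dagger P_\phi Q$ with $Q\in SU(2)$ diagonalizing $G$ and $P_\phi=\operatorname{diag}(1,e^{i\phi})$. Since a controlled global phase on the target is equivalent to a phase gate on the control, this yields
\begin{equation*}
C_{c}(G_t)=(P_\lambda)_c\,(Q^\dagger)_t\,C_c(P_t(\phi))\,(Q)_t,
\end{equation*}
as illustrated in the middle panel of Figure~\ref{Fig:optimisation}. I would apply this identity to every $C_{c_k}(G_{t_k})$ appearing in (\ref{eq:code:block}), so the block is expressed purely in terms of single-qubit gates and controlled-phase gates.

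\textbf{Step 2 (Merging via the chain structure).} When $\gcd(n,r)=1$, the relation $t_{k+1}\equiv c_k \pmod n$ implies that each qubit is visited exactly twice in the block, once as a control and once as a target. On each qubit line this places the inbound factor $Q_k$ immediately adjacent to the leading $G_{t_k}$, while the outbound factor $Q_k^\dagger$ sits next to the $P_\lambda$ contributed by the neighboring controlled gate in the chain (after sliding $P_\lambda$ past the intervening controlled-phase gate, which commutes with any diagonal single-qubit gate on either of its qubits). I absorb the inbound pieces into the leading layer to form new gates $\tilde G_j$, and collect the outbound pieces together with the $P_\lambda$'s into interleaved gates $\tilde Q_k\in U(2)$ sitting between successive controlled-phase gates, as in the rightmost panel of Figure~\ref{Fig:optimisation}.

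\textbf{Step 3 (Extracting the axial rotation).} Each interleaved $\tilde Q_k$ admits an Euler decomposition $\tilde Q_k=D_L\,R^X_k\,D_R$ with $D_L,D_R$ diagonal and $R^X_k$ a single-parameter $X$-rotation. Because $D_L$ and $D_R$ commute with both of the adjacent controlled-phase gates on their qubit, I push $D_R$ leftward through the preceding $C(P)$ and absorb it into $\tilde G_{t_k}$, and push $D_L$ rightward through the following $C(P)$, merging it into the next leading single-qubit gate (or, at the cyclic seam, into the global phase). The residual between consecutive controlled phases is exactly the single-parameter $R^X_k$ of (\ref{eq:code:block:normalized}). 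Uniqueness then follows because each reduction step is canonically determined once conventions are fixed, and a parameter count $3+(n-1)+n+n\approx 5n$ matches the $n$ gauge degrees of freedom absorbed by the spectral decomposition.

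\textbf{Main obstacle.} The hard part will be Step~3: one must verify that the sweep of $D_L/D_R$ absorptions along the chain $k=0,1,\ldots,n-1$ is globally consistent, i.e.\ the diagonal phase pushed forward from qubit $t_k$ does not conflict with the absorption happening on qubit $c_k$. An inductive argument that handles the cyclic seam carefully---where a final diagonal leftover must be absorbed either into the $j=0$ initial gate or the overall global phase---will be needed to certify that the normalized form is well-defined and unique up to global phase.
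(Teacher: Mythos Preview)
Your proposal is correct and follows essentially the same route as the paper: spectral decomposition of each $C(G)$ into a controlled phase sandwiched by single-qubit conjugators plus a phase on the control, absorption of the conjugators and control-phases into the adjacent single-qubit layer, and a $ZXZ$ Euler decomposition of the residual interleaved gates with the diagonal $Z$-factors commuted through the controlled-phase gates. Your explicit use of the chain relation $t_{k+1}\equiv c_k$ and your flagged ``main obstacle'' about consistency at the cyclic seam are more careful than the paper's own argument, which handles the same step more informally by simply stating that the diagonal rotations ``can be commuted through all controlled phase gates to either the end of the layer or to the beginning.''
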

\begin{proof}
The proof is based on transformations of the $C_a(G_b)$ gates and subsequent normalizations of the single-qubit unitaries. Let us diagonalise the single-qubit unitary $G= Q \, D \, Q^{\dagger}$, where $Q$ is some other single-qubit unitary and $D=\mathrm{diag}(e^{i\, \lambda},e^{i\, (\lambda+\phi)})$ with  $\lambda,\phi \in \mathbb{R}$ is the diagonal matrix of the eigenvalues. Then $C_{a}(G_{b}) = Q_b C_{a}(P_b(\phi) ) Q^{\dagger}_{b} \, P_{a}(\lambda)$ (see Figure \ref{Fig:optimisation}). We further merge the $Q^{\dagger}_{b}$ with the corresponding $G_{b}$ of the code block (\ref{eq:code:block}). In case $a=0$ the $P_{a}(\lambda)$ can be commuted to the beginning of the layer and merged with $G_0$. In case $a \neq 0$ the $P_{a}(\lambda)$ can be commuted through the end of the layer and either merged into the next layer or, if we are looking at the last layer in the circuit, traced out. At this point the action of the circuit (\ref{eq:code:block}) is equivalent to that of
$(\prod_{k=0}^{n-1} \tilde{Q}_{t_k} \, C_{c_k}(P_{t_k}) ) \prod_{j=0}^{n-1} \tilde{G}_j$ where $\tilde{G}_j$ and $\tilde{Q}_{t_k}$ are updated single qubit gates from the merging operation. Note that the single qubit gates in the following layer are also updated.\\

In the second round of the we split all the single-qubit gates $\tilde{Q}$ up to global phase into product of three rotations $\tilde{Q} \sim R_Z(\mu_1) R_X(\mu_2) R_Z((\mu_3)$.
We conclude the proof by noting that each of the diagonal operators $R_Z(mu_1)_{t_j}, R_Z((\mu_3)_{t_j}$ can be commuted through all controlled phase gates to either the end of the layer or to the beginning (in which case it can be merged with one of the $\tilde{G}_j$ gates).
\end{proof}

To summarize, with the possible exception of the last layer in the classifier, a layer is described (up to a global phase) by at most $5\,n$ parameters,
at most $n$ for all controlled phase gates $C(P)$, at most $n$ for all $x$-rotations $R^X$ and at most $3\,n$ for all fully parametrised single qubit gates $G$.

\subsection{Graphical representation of gates}\label{Sec:nn} % WAS: Equivalence to neural networks}\label{Sec:nn}

As a product of elementary gates, the model circuit $U_x$ can be understood as a sequence of linear layers of a neural network with the same number of units in each ``hidden layer''. This perspective facilitates the comparison of the circuit-centric quantum classifier with widely studied neural network models, and visualises the connectivity power of (controlled) single qubit gates. The position of the qubit (as well as the control) determine the architecture of each layer, i.e. which units are connected and which ``weights'' are tied in a ``gate-layer''. \\

To show an example, consider a Hilbert space of dimension $2^n$ with $n=2$ qubits $\ket{q_0 q_1}$. A single qubit unitary $G$ applied to $q_0$ would have the following matrix representation
\[ G_{0} = \begin{pmatrix} \def\arraystretch{0.4}
 \e^{i\beta} \cos \alpha & 0      & \e^{i\gamma} \sin \alpha & 0   \\
0      &  \e^{i\beta} \cos \alpha & 0      & \e^{i\gamma} \sin \alpha \\
-\e^{-i\gamma} \sin \alpha & 0      &  \e^{-i\beta} \cos \alpha & 0     \\
0      & -\e^{-i\gamma} \sin \alpha & 0      &  \e^{-i\beta} \cos \alpha
\end{pmatrix},\]
 while the same unitary but controlled by qubit $q_1$, $C_{1}(G_{0})$ has matrix representation
 \[C_{1}(G_{0}) = \begin{pmatrix}  \def\arraystretch{0.4}
1 & 0      & 0 & 0      \\
0      &  \e^{i\beta} \cos \alpha & 0      & \e^{i\gamma} \sin \alpha  \\
0 & 0      & 1 & 0     \\
0      &-\e^{-i\gamma} \sin \alpha & 0      &  \e^{-i\beta} \cos \alpha  \\
\end{pmatrix}\]
At the same time, these two gates can be understood as layers with connections displayed in Figure \ref{Fig:graph}.\\

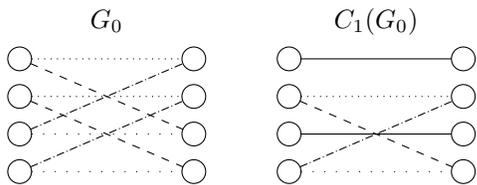
\begin{figure}[t]
\begin{center}
\begin{tikzpicture}[every node/.style={inner sep=0,outer sep=0}]
\path (1,0.5) node[] {$G_{0}$};
\path (0,0) node[draw, shape=circle,anchor = east] (i0) {\phantom{$h$}};
\path (0,-0.5) node[draw, shape=circle,anchor = east](i1) {\phantom{$h$}};
\path (0,-1) node[draw, shape=circle,anchor = east] (i2) {\phantom{$h$}};
\path (0,-1.5) node[draw, shape=circle,anchor = east] (i3) {\phantom{$h$}};
\path (2,0) node[draw, shape=circle,anchor = west] (o0) {\phantom{$h$}};
\path (2,-0.5) node[draw, shape=circle,anchor = west](o1) {\phantom{$h$}};
\path (2,-1) node[draw, shape=circle,anchor = west] (o2) {\phantom{$h$}};
\path (2,-1.5) node[draw, shape=circle,anchor = west] (o3) {\phantom{$h$}};
\draw[dotted] (i0) -- (o0);
\draw[dashed] (i0) -- (o2);
\draw[dotted] (i1) -- (o1);
\draw[dashed] (i1) -- (o3);
\draw[dashdotted] (i2) -- (o0);
\draw[loosely dotted] (i2) -- (o2) node[] {};
\draw[dashdotted] (i3) -- (o1) ;
\draw[loosely dotted] (i3) -- (o3);

\end{tikzpicture} ~~~~~~
\begin{tikzpicture}[every node/.style={inner sep=0,outer sep=0}]
\path (5,0.5) node[] {$C_{1}(G_{0})$};
\path (4,0) node[draw, shape=circle,anchor = east] (ii0) {\phantom{$h$}};
\path (4,-0.5) node[draw, shape=circle,anchor = east](ii1) {\phantom{$h$}};
\path (4,-1) node[draw, shape=circle,anchor = east] (ii2) {\phantom{$h$}};
\path (4,-1.5) node[draw, shape=circle,anchor = east] (ii3) {\phantom{$h$}};

\path (6,0) node[draw, shape=circle,anchor = west] (oo0) {\phantom{$h$}};
\path (6,-0.5) node[draw, shape=circle,anchor = west](oo1) {\phantom{$h$}};
\path (6,-1) node[draw, shape=circle,anchor = west] (oo2) {\phantom{$h$}};
\path (6,-1.5) node[draw, shape=circle,anchor = west] (oo3) {\phantom{$h$}};
\draw (ii0) -- (oo0);
\draw[dotted] (ii1) -- (oo1);
\draw[dashed] (ii1) -- (oo3);
\draw (ii2) -- (oo2) ;
\draw[dashdotted] (ii3) -- (oo1) ;
\draw[loosely dotted] (ii3) -- (oo3);

\end{tikzpicture}
\end{center}
\caption{Graphical representation of quantum gates. Left: A single qubit gate applied to the first qubit of a $2$-qubit register. Right: A single qubit gate and a controlled single cubit gate applied to a two-qubit register. A solid line corresponds to a unit weight, while the other lines stand for a variable weight parameter. The same line styles indicate the same weights.}
\label{Fig:graph}
\end{figure}
 
It becomes obvious that a single qubit gate connects two sets of two variables with the same weights, in other words, it ties the parameters of these connections. The control removes half of the ties and replaces them with identities. A quantum circuit can therefore be understood as an analog of a neural network architecture with highly symmetric, unitary linear layers, and controls break some of the symmetry. Note that although we speak of linear layers here, the weights (i.e., the entries of the weight matrix representing a gate) have a nonlinear dependency on the model parameters $\theta$, a circumstance that plays a role for the convergence of the hybrid training method.

%\textcolor{red}{[there was the "Nonlinearities" subsec here. Restore it eventually]}

\section{Training}\label{Sec:train}

We consider a stochastic gradient descent method for training. The parameters that define every single qubit gate of the quantum circuit are at every stage of the quantum algorithm classical values. However, we are computing the model function on a quantum device, and have therefore no `classical' access to its gradients. This means that the training procedure has to be a hybrid scheme that combines classical processing to update the parameters, and quantum information processing to extract the gradients. We will show how to use the quantum circuit to extract estimates of the analytical gradients, as opposed to other proposals for variational algorithms based on derivative-free or finite-difference gradients (see \cite{guerreschi17}). A related approach, but for a different gate representation, has been proposed during the time of writing in Ref. \cite{Farhi2018}.

\subsection{Cost function}

We choose a standard least-squares objective to evaluate the cost of a parameter configuration $\theta$ and a bias $b$ given a training set, $\mathcal{D} = \{ (x^1,y^1),...,(x^M,y^M)\}$,
\[ C(\theta, b; \; \mathcal{D} ) =  \frac{1}{2} \sum\limits_{m=1}^{M}|\pi(x^m; \; \theta, b) - y^m|^2,\]
where $\pi$ is the continuous output of the model defined in Equation (\ref{Eq:contin_out}). Note that we can easily add a regularisation term (i.e., an $L_1$ or $L_2$ regulariser) to this objective, since it does not require any additional quantum information processing. For the sake of simplicity we do not consider regularisation in this paper. \\

Gradient descent updates each parameter $\mu$ from the set of circuit parameters $\theta$ via
\[\mu^{(t)} = \mu^{(t-1)} - \eta \frac{\mu C(\theta, b; \; \mathcal{D} )}{\partial \theta },\]
and similarly for the bias,
\[b^{(t)} = b^{(t-1)} - \eta \frac{\partial C(\theta, b; \; \mathcal{D} )}{\partial b}.\]
The learning rate $\eta$ can be adapted during training and we can also add momenta to the updates, which can significantly decrease the convergence time.  \\

In \textit{stochastic} gradient descent, we do not consider the entire training set $\mathcal{D}$ in every iteration, but only a subset or \textit{batch} $\mathcal{B} \subset \mathcal{D}$ \footnote{Stochastic gradient descent originally referred to the case $B=1$, but is often used as a synonym for \textit{minibatch} gradient descent as opposed to \textit{batch} gradient descent using the full dataset.}.  The derivatives in the parameter updates are therefore taken with respect to $C(\theta, b; \; \mathcal{B} )$ instead of $C(\theta, b; \; \mathcal{D} )$. In principle, quantum computing allows us to encode a batch of $B$ training inputs into a quantum state in superposition and feed it into the classifier, which can be used to extract gradients for the updates from the quantum device. However, guided by the design principle of a low-depth circuit, this would extend the state preparation routine to be in $\mathcal{O}(BN)$ for general cases, where $N$ is the size of each input in the batch (which becomes even worse for more sophisticated feature maps in Step 1). We therefore consider single-batch gradient descent here (i.e., $B=1$), where only one randomly sampled training input is considered in each iteration. Single-batch stochastic gradient descent can have favourable convergence properties, for example in cases where there is a lot of data available \cite{bottou10}.

\subsection{Hybrid gradient descent scheme}

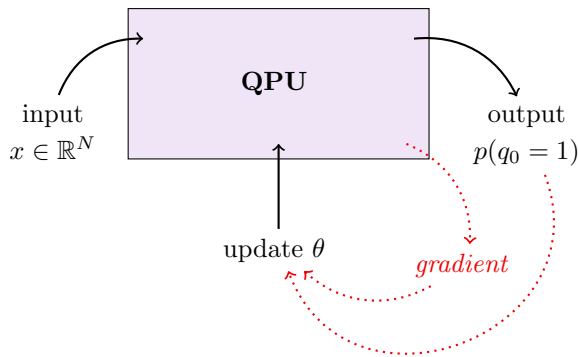
\begin{figure}[t]
\begin{tikzpicture}
\setlength{\baselineskip}{14pt}
\filldraw[mypurple, draw = black] (-2,1) -- (2,1) -- (2,-1) -- (-2,-1)--cycle ;
\path (0,0) node[rounded corners=0.1cm, anchor=center, align = center] {\textbf{QPU}  };
\path (-3,-1) node[anchor=base, align = center] (in) {input \\ $x \in \mathbb{R}^N$ };
\path (3.3,-1) node[anchor=base, align = center] (out) {output \\$ p(q_0 =1) $ };
\draw[->, thick] (in) to [bend left=35] (-1.8,0.6);
\draw[->, bend left, thick] (1.8,0.6)to [bend left=35] (out);
\path (0,-2.3) node[anchor=base, align = center] (pars) {update $\theta$ };
\draw[->,  thick] (pars) --(0,-0.8);

\path (2.5,-2.5) node[ mydarkred, anchor=base, align = center] (delta) {\textit{gradient}  };
\draw[->, mydarkred, dotted, bend left, thick] (1.7,-0.8) to [bend left=35] (delta);
\draw[->,  mydarkred, dotted, bend left, thick] (delta) to [bend left=35]  (pars);
\draw[->,  mydarkred, dotted, bend left, thick] (out) to [bend left = 90 , looseness = 1.6]  (pars);

\end{tikzpicture}
\caption{Idea of the hybrid training method. The quantum processing unit (QPU) is used to compute outputs and gradients of the model in order to update the parameters for each step of the gradient descent training.   }
\label{Fig:scheme}
\end{figure}

The derivative of the objective function with respect to a model parameter $\nu = b,\mu $ (where $\mu \in \theta$ is a circuit parameter) for a single data sample $\{(x^m, y^m)\}$ is calculated as
\[\frac{\partial C}{\partial \nu}  = \left( \pi(x^m;\nu)  - y^m \right) \; \partial_{\nu} \pi(x^m;\nu) . \]
Note that $\pi(x^m;\nu)$ is a real-valued function and the $y^m$ and the parameters are also real-valued. Hence $\frac{\partial C}{\partial \nu} \in \mathbb{R}$. \\

While $\pi(x^m;\nu)$ is a simple prediction we can get from the quantum device, and $y^m$ is a target from the classical training set, we have to look closer at how to compute the gradient $\partial_{\nu} \pi$.  For $\nu=b$ this is in fact trivial, since
\[ \partial_{b} \pi(x^m;b) = 1. \]
In case of $\nu = \mu$, the gradient forces us to compute derivatives of the unitary operator. In the following we will calculate the gradients in vector as well as in Dirac notation and show how a trick allows us to estimate these gradients using a slight variation of the model circuit $S_x$.\\

The derivative of the continuous output of the model with respect to the circuit parameter $\mu$ is formally given by
\begin{eqnarray}
 \partial_{\mu} \pi(x^m;\mu)  &=& \partial_{\mu} \;  p(q_0=1; x^m, \theta) \nonumber \\
&=& \partial_{\mu} \sum\limits^{2^n}_{k=2^{n-1}+1} \left( U_{\theta} \; \varphi(x) \right)^{\dagger}_k \left( U_{\theta} \; \varphi(x) \right)_k \nonumber \\
&=& 2 \mathrm{Re}\left\{  \sum\limits^{2^n}_{k=2^{n-1}+1} \left( \partial_{\mu} U_{\theta} \; \varphi(x) \right)^{\dagger}_k \left( U_{\theta} \; \varphi(x) \right)_k  \right\}. \nonumber
\end{eqnarray}
The last expression contains the `derivative of the circuit', $\partial_{\mu} U_{\theta}$, which is given by
\begin{equation*}
\partial_{\mu} U_{\theta} = U_L \hdots (\partial_{\mu} U_i) \hdots U_1,
\end{equation*}
where we assume for simplicity that only the parametrised gate $U_i$ depends on parameter $\mu$. If the parameters of different unitary matrices are tied then the derivative can simply be found by applying the product rule.\\

In Dirac notation, we have expressed the probability of measuring the first qubit in state $1$ through the expectation value of a $\sigma_z$ operator acting on the same qubit,  $p(q_0=1; x^m, \theta) = \frac{1}{2} ( \bra{U_{\theta} \varphi(x)} \sigma_z \ket{U_{\theta} \varphi(x)} +1) $ (see Equation \ref{Eq:dirac_model}). We can use this expression to write the gradient in Dirac notation,
\begin{equation}
  \partial_{\mu} \;  \pi(x^m;\theta,b)    =  \mathrm{Re} \{ \bra{ (\partial_{\mu}U_{\theta})\varphi(x^m)} \sigma_z \ket{U_{\theta} \varphi(x^m)} \} . \label{Eq:dirac_deriv}
  \end{equation}
This notation reveals the challenge in computing the gradients using the quantum device. The gradient of a unitary is not necessarily a unitary, which means that $\ket{ (\partial_{\mu}U_{\theta})\varphi(x^m)}$ is not a quantum state that can arise from a quantum evolution. How can we still estimate gradients using the quantum device?

\subsection{Classical linear combinations of unitaries}

It turns out that in our architecture we can always represent  $\partial_{\mu} U_{\theta}$ as a linear combination of unitaries. Linear combination of unitaries is a known technique in quantum mechanics \cite{childs12}, where the sum is implemented in a coherent fashion. In our case where we allow for classical postprocessing, we do not have to apply unitaries in superposition, but can simply run the quantum circuit several times and collect the output. This is what we will call \textit{classical linear combinations of unitaries} here.\\

Consider the derivative of $U_i$ for the single-qubit gate defined in Equation (\ref{Eq:Ui}),
\[ \partial_{\mu}  U_i  = \mathbb{I} \otimes \cdots \otimes  \partial_{\mu}  G(\alpha, \beta, \gamma)  \otimes \cdots \otimes \mathbb{I}, \]
where $G(\alpha, \beta, \gamma)$ is given in the parametrisation introduced in Equation (\ref{eq:1q:parametrization}) and discounting the global phase.  The derivatives of the single qubit gate $G(\alpha, \beta, \gamma)$ with respect to the parameters  $\mu = \alpha, \beta,\gamma$ are as follows:
\begin{eqnarray}
 \partial_{\alpha} G  &=&  G(\alpha + \frac{\pi}{2}, \beta, \gamma )  \label{Eq:Gderiv_alpha} \\
 \partial_{\beta} G &=& \frac{1}{2} G(\alpha, \beta + \frac{\pi}{2} , 0  ) + \frac{1}{2}  G(\alpha, \beta + \frac{\pi}{2} , \pi ) \label{Eq:Gderiv_beta1} \\
\partial_{\gamma} G &=& \frac{1}{2}G(\alpha, 0, \gamma + \frac{\pi}{2}  ) + \frac{1}{2}  G(\alpha, \pi, \gamma + \frac{\pi}{2} ) \label{Eq:Gderiv_beta2}
\end{eqnarray}
One can see that while the derivative with respect to $\alpha$ requires us to implement the same gate but with the first parameter shifted by $\frac{\pi}{2}$, the derivative with respect to $\mu=\beta$ [$\mu=\gamma$] is a linear combination of single qubit gates where the original parameter $\beta$ [$\gamma$] is shifted by $\frac{\pi}{2}$, while $\gamma$ [$\beta$] is replaced by $0$ or $\pi$.\\

Differentiating a controlled single qubit gate is not that immediate, but fortunately we have
\[  \partial_{\mu } \; C(G)  = \frac{1}{2} \bigg(  C (\partial_{\mu} G) - C ( - \partial_{\mu} G)  \bigg), \]
which means that the derivative of the controlled single qubit gate is half of the difference between a controlled derivative gate and the controlled negative version of that gate. In our design, when $\mu=\alpha$, each of the two controlled gates is unitary, while $\mu = \beta, \gamma$ requires us to use the linear combinations in (\ref{Eq:Gderiv_beta1}) and (\ref{Eq:Gderiv_beta2}).\\

If we plug the gate derivatives back into the expressions for the gradient in Equation (\ref{Eq:dirac_deriv}), we see that the gradients, irrespective of the gate or parameter, can be computed as `classical' linear combinations of the form
\begin{equation*} \partial_{\mu} \;  \pi(x^m;\theta,b)   =  \sum_{j=1}^J a_j \;  \mathrm{Re} \left\{ \bra{ U_{\theta^{[j]}}  \varphi(x^m)} \sigma_z \ket{U_{\theta}\varphi(x^m)} \right\},  \end{equation*}
where $\theta^{[j]}$ is a modified vector of parameters corresponding to a term appearing in Equations (\ref{Eq:Gderiv_alpha} - \ref{Eq:Gderiv_beta2}), and $a_j$ is the corresponding coefficient also stemming from the Equations (\ref{Eq:Gderiv_alpha} - \ref{Eq:Gderiv_beta2}).
If there is no parameter tying between the constituent gates, for example, then $J$ is either $2$ or $4$ depending on whether the gate containing parameter $\mu$ is a one- or two-qubit gate. For each circuit, the eventual derivative has to be estimated by repeated measurements, and we will discuss the number of repetitions in the following section.  \\

The last thing to show is that we can compute the terms $\mathrm{Re} \{ \bra{ U_{\theta^{[j]}}  \varphi(x^m)} \sigma_z \ket{U_{\theta}\varphi(x^m)} \}$ with the quantum device, so that classical multiplication and summation can deliver estimates of the desired gradients. \\

\begin{observ}
Given two unitary quantum circuits $A$ and $B$ that act on a $n$ qubit register to prepare the two quantum states $\ket{A}, \ket{B}$, and which can be applied conditioned on the state of an ancilla qubit, we can use the quantum device to sample from the probability distribution $p = \frac{1}{2} + \frac{1}{2} \; \mathrm{Re} \braket{A}{B}$
\end{observ}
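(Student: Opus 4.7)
The plan is to use the Hadamard test, a standard primitive for estimating real parts of inner products between states produced by controllable unitaries. I would introduce one ancilla qubit, initialised in $\ket{0}$, together with the $n$-qubit register initialised in $\ket{0\ldots 0}$ (the same reference state that $A$ and $B$ act on to prepare $\ket{A}$ and $\ket{B}$). The single non-trivial resource I need is what the hypothesis already grants: the ability to apply $A$ and $B$ conditioned on the ancilla.

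First I would apply a Hadamard to the ancilla to bring the joint state to $\tfrac{1}{\sqrt{2}}(\ket{0}+\ket{1})\otimes\ket{0\ldots 0}$. Then I would apply $A$ controlled on the ancilla being in $\ket{0}$ (an anti-controlled $A$) followed by $B$ controlled on the ancilla being in $\ket{1}$. After these two controlled calls the register is entangled with the ancilla as $\tfrac{1}{\sqrt{2}}(\ket{0}\ket{A}+\ket{1}\ket{B})$. Finally I would apply a second Hadamard to the ancilla and measure it in the computational basis.

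A short computation then finishes the argument. Applying the second Hadamard yields
\[
\tfrac{1}{2}\bigl(\ket{0}(\ket{A}+\ket{B})+\ket{1}(\ket{A}-\ket{B})\bigr),
\]
so the probability of observing outcome $0$ on the ancilla is
\[
p(0)=\tfrac{1}{4}(\bra{A}+\bra{B})(\ket{A}+\ket{B})=\tfrac{1}{2}+\tfrac{1}{2}\,\mathrm{Re}\braket{A}{B},
\]
using that $\ket{A}$ and $\ket{B}$ are normalised. Repeating the circuit and averaging the ancilla outcomes yields a Bernoulli estimator of this probability, from which $\mathrm{Re}\braket{A}{B}$ is immediately recovered.

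The only subtle point is the implementation of the anti-controlled $A$: if the hardware only supports the standard ``controlled on $\ket{1}$'' primitive, one conjugates the ancilla by Pauli-$X$ around the $A$-call, which adds two $X$ gates but no conceptual difficulty. Everything else is a routine bookkeeping calculation with amplitudes; there is no deep obstacle because the hypothesis explicitly furnishes the controlled versions of $A$ and $B$ that the Hadamard test requires.
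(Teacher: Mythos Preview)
Your proposal is correct and follows essentially the same Hadamard-test argument as the paper: prepare $\tfrac{1}{\sqrt{2}}(\ket{0}\ket{A}+\ket{1}\ket{B})$ via controlled applications of $A$ and $B$, apply a Hadamard to the ancilla, and read off $p(0)=\tfrac{1}{2}+\tfrac{1}{2}\,\mathrm{Re}\braket{A}{B}$. If anything, you are more explicit than the paper in spelling out the anti-controlled-$A$/controlled-$B$ construction and its implementation via $X$-conjugation.
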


\begin{proof}
The proof of this observation follows from the exact same reasoning that underlies the Hadamard test.  For concreteness, we specify the algorithm below.
 We use the circuits $A,B$ to prepare the two states $\ket{A}, \ket{B}$ conditioned on an ancilla,
 \[\frac{1}{\sqrt{2}} \left(\ket{0} \ket{A} + \ket{1} \ket{B} \right).\]
 Applying a Hadamard on the ancilla yields
 \[\frac{1}{2} \left(\ket{0} (\ket{A} + \ket{B}) + \ket{1} (\ket{A} - \ket{B}) \right),
 \]
 where the probability of the ancilla to be in state $0$ is given by
\[ p(a=0) = \frac{1}{2} + \frac{1}{2} \mathrm{Re} \braket{A}{B}.\]
\end{proof}
To use this interference routine we have to add an extra qubit and implement $U_{\theta}$ and $U_{\theta^{[j]}}$ conditioned on the state of the ancilla. Since these two circuits coincide in all except from one gate, we do in fact only need to apply the differing gate in conditional mode. This turns a single qubit gate into a singly controlled single qubit gate, and a controlled gate into a double controlled gate.  The desired value  $\mathrm{Re} \braket{A}{B}$ can be derived by resolving $p(a=0)$ through measurements and computing
\[   \mathrm{Re} \braket{A}{B} =2p(a=0) - 1. \]

\subsection{Dropout}

%\textcolor{red}{[TODO ITEM 3: Explain dropout.]}[alexeib, 12/17/17]
Despite the relatively small parameter space, our circuit-centric architecture is not immune to overfitting. Benchmarking on smaller data sets reveals cases where the training data is fit perfectly (zero misclassifications) by a model with exponentially few parameters, but the same model has significant generalization errors on the test holdout.\\

The approach that often helps is a simple \emph{dropout regularization} that is both quantum-inspired and quantum ready (in the sense that it is easy in both classical simulation and quantum execution). The essence of the approach is to randomly select and measure one of the qubits, and set it aside for a certain number $N_{dropout}$ of parameter update epochs. After that, the qubit is re-added to the circuit and another qubit (or, perhaps, no qubit) is randomly dropped.
This strategy works by ``smoothing'' the model fit and it generally inflates the training error, but often deflates the generalization error.\\

The effect of such dropout regularization is similar, in spirit, to dropout regularization in a traditional neural network when the dropout probability $p=0.5$ is used. Indeed, freezing a randomly chosen qubit for a certain number of epochs prevents a half of the amplitudes in the amplitude encoding from affecting the stochastic gradient during these epochs. In the graphical representation of the circuit-centric classifier this is analogous to removing a half of the nodes from a hidden layer for a certain number of epochs.

\subsection{Performance analysis} \label{Sec:performance}
In order to use the circuit-centric quantum classifier with near-term quantum devices, we need to motivate that it only requires a small number of qubits, a low circuit depth as well as a high error tolerance. After introducing the details of the algorithms for inference and training, we want to discuss these three points in more detail.

\subsubsection{Circuit depth and width}
The number of qubits needed for the circuit-centric quantum classifier (if we use amplitude encoding as explained in Section \ref{Sec:stateprep}) is given by $d \lceil \log_2 N \rceil $ where $N$ is the dimension of the inputs and $d$ is the number of copies we consider for a tensorial feature map. For example, if $d=1$, we can process a dataset of $1000$-dimensional inputs with $ n =10$ qubits. With about $50$ qubit we
can use a tensorial feature map of $d=5$ (i.e., prepare $5$ copies of the state) and map the data into a $2^{50}$ dimensional feature space. For the inner products subroutine in the hybrid training scheme, we need one extra ancilla qubit. The algorithm is therefore very compact as much as circuit width is concerned, a feature stemming from the amplitude encoding strategy.\\

The bottleneck of the circuit depth is the state preparation routine $S_x$. Comparably, implementing the model circuit costs a negligible amount of resources. Using an architecture with $K$ codeblocks of ranges $(r_1,...,r_K)$ and $n$ qubits, we need
\[  Kn + \sum\limits_{k=1}^K n/\mathrm{gcd}(n,r_k)\]
parametrised (controlled) single qubit gates to implement $U_{\theta}$, which is polynomial in the number of qubits. Each of these gates has to be decomposed into the elementary constant gate set used in the physical implementation of the quantum computer. Every parametrised single qubit gate can be efficiently translated into circuit $\tilde{G}$ of at most $\mathcal{O}(\log \frac{1}{\delta})$ constant elementary gates from a given gate set such as ``Clifford-plus-T'' to a fidelity of at least $(1- \delta)$ (cf. \cite{bocharov2013,kliuchnikov2014,selinger2015,kliuchnikov2015}).
 %P. Selinger. Efficient Clifford+T approximation of single-qubit operators. QIC, 15(1-2):159–180, Jan. 2015
 %A. Bocharov, Y. Gurevich, and K. M. Svore. Efficient decomposition of single-qubit gates into V basis circuits. Phys. Rev. A, 88:012313 (13 pages), 2013
 % A Framework for Approximating Qubit Unitaries, Vadym Kliuchnikov, Alex Bocharov, Martin Roetteler, Jon Yard, https://arxiv.org/abs/1510.03888
 %Asymptotically Optimal Topological Quantum Compiling Vadym Kliuchnikov, Alex Bocharov, Krysta M. Svore, Phys. Rev. Lett. 112, 140504 (2014)
 Methods such as automated optimization \cite{nam17} may reduce the costs further.\\

General state preparation can in the worst case require  $c_{\mathrm{cn}} 2^n$ CNOT gates as well as $c_{\mathrm{sgl}} 2^n$ single qubit gates. For current algorithms $c_{\mathrm{sgl}}$ and $c_{\mathrm{cn}}$ is equal to or slighly larger than $1$ \cite{knill95,mottonen04,vartiainen04,plesch11,iten16}. This means that for the example of $N=1000$ from above, we would indeed require $2 \cdot 2^n = 2048$ gates only to prepare the states. Issues of fidelity arise, since without error correction we cannot guarantee to prepare a close enough approximation of $x$. Our simulations show that adding $5\%$ noise to the inputs does not change the classification result significantly, which suggests that the classifier is rather robust against input noise. Still, until error correction becomes a reality, it is therefore advisable to focus on lower-dimensional  datasets. Two interesting exceptions have to be mentioned. First, if an algorithm is known that efficiently allows us to approximate the (preprocessed) inputs with a product state, $x \approx a_1 \otimes \cdots \otimes a_K$ the resources reduce to the number of gates required to prepare the $a_1,...,a_K$ in amplitude encoding \cite{grover02}. Second, as other authors in quantum machine learning research, we point out that if the data is given by a shallow and robust digital quantum simulation routine performed on the same register of qubits, our classifier can be used to train with `quantum data', or inputs that are `true' wavefunctions.

\subsubsection{Number of repetitions for output estimation}
The continuous output of the circuit-centric quantum classifier was based on the probability of measuring the first qubit in state $1$. To resolve this number, we have to repeat the entire algorithm multiple times. Each measurement samples from the Bernoulli distribution
$p(q_0=1) = \nu  $,
and we want to estimate $\nu$ from the $S$ samples $q^1_1,...,q^S_1$. The number of samples needed to estimate $\nu$ at error $\epsilon$ with probability $>2/3$ scales as $O(\mathrm{Var}(\sigma_z)/\epsilon^2))$, where $\mathrm{Var}(\sigma_z)$ is the variance of the sigma-z operator that we measure with respect to the final quantum state \cite{guerreschi17, schuld17ibm}. If amplitude estimation is used then the number of repetitions of circuit centric classifier falls into $O(1/\epsilon)$ at a price of increasing the circuit depth by a factor of $O(1/\epsilon)$.

%Setting the variance for now to $1$, if we want to estimate $\nu$ with an error of $0.01$, we need more than $10,000$ samples.\\ TODO:REVIEW: deprecated per Nathan.

\subsubsection{Parameter noise}

An important feature of the circuit-centric classifier is its robustness to noise in the inputs and parameters.
Suppose $\delta>0$ is some small value and we allow parameter permutations (resp. input permutations) such that for each constituent gate $G$ the permuted gate $G'$ is $\delta$-close to $G$: $||G-G'||<\delta$. (Or for encoded input $\phi$ a perturbed input $\phi'$ is $\delta$-close to $\phi$.) we allow certain imprecisions in some or all parameter values and that such imprecisions are bounded below some constant $\delta$. %[TODO: SENTENCE LONG AND NOT CLEAR]
Since all the constituent operations are unitary, the impact of the parameter imprecisions is never amplified across the circuit at the defect imposed by the imperfect circuit on the final state before the measurement is bounded by
$4\, L \, \delta$ in the worst theoretical case, where $L$ is the number of elementary parametrised gates which have at most $4$ parameters. In practice the propagated error should be much smaller that this bound.\\

The same analysis applies to imperfections in the quantum gates execution (other than parameter drift). There is no amplification of defect across the circuit and the imperfection of the final state is bounded by the sum of imperfections of individual gates. Finally, the ket encoding of the input data does not have to be perfect either. A possible imperfection or approximation during the state preparation will not be amplified by the classification circuit, and the drift of the pre-measurement state will be never be greater than the drift of the initial state. Another widely advertised advantage of variational quantum algorithms is that they can learn to counterbalance systematic errors in the device architecture -- for example when one gate always over-rotates the state by the same value.\\

\begin{table}[t]
\def\arraystretch{1.5}
\begin{tabular}{R{1.7cm} R{1.6cm} R{1.8cm} R{1.5cm}  R{1.3cm}   }
 \toprule
\footnotesize Noise level &\footnotesize RI Mean  & \footnotesize RI St.Dev. &\footnotesize  RI max &\footnotesize RI min \\
 \hline
0.1\% &1\%&1.47\% & 3.5\% & 0\%  \\
1\% &7\%& 6.67\% & 17\%&0\%  \\
10\% &60.2\% & 55.8\% & 192.3\% & 0\% \\
\botrule
\end{tabular}
\caption{Relative impact (RI) of uncorrelated parameter noise on the classification test error over SEMEION and MNIST256 data, using the generic 8-qubit model circuit displayed in Figure \ref{Fig:maxent_circuit} }
\label{Tbl:parameter_noise}
\end{table}

In our simulation experiments we have systematically evaluated the effects on the quality of the classification of 0.1\%, 1\% and 10\% random perturbations in (a) the circuit parameters and (b) the input data vectors. As expected due to the unitariness, the effect of input noise is not amplified by the classifier circuit and thus had proportionate impact on the percentage of misclassifications. Somewhat more surprisingly, random perturbations of the trained circuit parameters almost never had the worst case estimated impact on the classification error. The 0.1\% uncorrelated parameter noise in the majority of cases had no impact on the classification results.\\

Here we limit the noise impact discussion to the context of the ``SEMEION'' and ``MNIST256'' data sets of the benchmark sets displayed in Table \ref{Tbl:datasets}. Our observations are easier to calibrate in this context since both datasets are encoded with $8$ qubits and the same model circuit architecture with $33$ gates at depth $19$ (as shown in Figure \ref{Fig:maxent_circuit} ) is used in the classifier.\\

The 0.1\% parameter noise had no impact on classification in about 60\% of our test runs. The maximum relative drift of the test error has been 3.5\% (in one the of remaining runs), the mean drift has been 1\% with the standard deviation of approximately 1.47\%. The 1\% parameter noise had a more pronounced, albeit fairly robust impact, which was non-trivial in about 90\% of our test runs. The maximum relative change in the test error rate has been 17\%, the mean relative change has been 7\% with the standard deviation of approximately 6.67\%. This statistics is summarized in Table \ref{Tbl:parameter_noise}. Finally the 10\% parameter noise lead to significant loss of classification robustness (although still smaller that the worst case analysis suggests). The maximum relative change in the test error rate has been 192.3\%, the mean has been 60.2\% with the standard deviation of 55.8\%. Curiously, the minimum change has been zero in one of the ``SEMEION''-based runs. This suggests that 10\% perturbation of parameters has no stable amplification pattern, and the model should be best re-trained after such perturbation.\\

The practical takeaway from these observations is that the circuit-centric classifiers may work on small quantum computers even in the absence of strong quantum error correction.

\section{Simulations and benchmarking}\label{Sec:sim}

To demonstrate that the circuit-centric quantum classifier works well in comparison with classical machine learning methods we present some simulations. The circuit-centric classifier was implemented on a classical computer using the F\# programming language.

\subsubsection{Datasets}

\begin{table*}[t]
\def\arraystretch{1.5}
\begin{tabular}{p{2.2cm} p{1.7cm} R{1.7cm} R{1.7cm} R{1.7cm}  p{0.7cm} p{7cm} }
 \toprule
ID & domain  & \# features & \# classes & \# samples & & preprocessing \\
 \hline
%BABI & reasoning & ?  & ? && \\
CANCER &decision&$32$ & $2$ & $569$ & &none \\
SONAR &decision& $60$ & $2$&$208$ & &padding with noninformative features \\
WINE &decision & $13$ & $3$ &$178$ & &padding with noninformative features \\
SEMEION &OCR\footnote{$Optical\,Character\,Recognition$}& $256$& $10$ &$1593$&& padding with noninformative features\\
MNIST256 &OCR& $256$&$10$& $2766$ & &coarse-graining and deskewing\\
%ARTIF & synthetic& $N$ & $2$ & $M$ & TODO:FILL OR KILL \\
\botrule
\end{tabular}
\caption{Benchmark datasets and preprocessing. }
\label{Tbl:datasets}
\end{table*}

\begin{spacing}{.7}
\begin{table*}[t]
\def\arraystretch{1.3}
\begin{tabular}{ p{2.cm} >{\raggedright\arraybackslash}p{3cm} >{\raggedright\arraybackslash}p{6.2cm} >{\raggedright\arraybackslash}p{6cm} }
\toprule
ID & model& fixed hyperparameters & variable hyperparameters  \\ \hline
QC  & Circuit-centric quantum classifier & entangling circuit architecture & dropout rate, number of blocks, range\\
PERC  & perceptron & - &  regularisation type \\
%SVMpoly  & Support vector machine & polynomial kernel $ \kappa(x,x') = (x^T x' +c)^d $ & regularisation strength of slack variables, \newline degree $d$, offset $c$ \\
%SVMrbf  & Support vector machine & radial basis function kernel $\kappa(x,x') = e^{-\gamma || x-x'||^2}$  & regularisation strength of slack variables, variance $\gamma$ \\
MLPlin  & neural network & dim. of hidden layer $=N$ & regularisation strength, optimiser, \newline initial learning rate \\
MLPshal  & neural network & dim. of hidden layer $=\lceil \log_2 N \rceil$ & regularisation strength, optimiser,  \newline initial learning rate\\
MLPdeep  & neural network & dim. of hidden layers $=\lceil \log_2 N \rceil$ & regularisation strength, optimiser,  \newline initial learning rate \\
SVMpoly1  & support vector machine & polynomial kernel with $d=1$, regularisation strength of slack variables $=1$, offset $c=1$ & - \\
SVMpoly2  & support vector machine & polynomial kernel with $d=2$, regularisation strength of slack variables $=1$, offset $c=1$ & - \\
\botrule
\end{tabular}
\caption{Benchmark models explained in the text and possible choices for hyperparameters. }
\label{Tbl:models}
\end{table*}
\end{spacing}

We select six standard benchmarking datasets (see Table \ref{Tbl:datasets}). The CANCER, SONAR, WINE and SEMEION data sets are taken from the well-known UCI repository of benchmark datasets for machine learning \footnote{The most recently accessed location of the UCI database is \url{https://archive.ics.uci.edu/ml/datasets.html} (more specifically, the CANCER data set is retrieved from \url{https://archive.ics.uci.edu/ml/datasets/Breast+Cancer+Wisconsin+\%28Diagnostic\%29}, the SEMEION from \url{https://archive.ics.uci.edu/ml/datasets/Semeion+Handwritten+Digit}, the SONAR from \url{https://archive.ics.uci.edu/ml/datasets/Connectionist+Bench+\%28Sonar\%2C+Mines+vs.+Rocks\%29} and the WINE from \url{https://archive.ics.uci.edu/ml/datasets/Wine})}. The MNIST data set is the official NIST Modified handwritten digit recognition data set. While CANCER and SONAR are binary classification exercises, the other data sets call for multi-class classification. Although the circuit-centric quantum classifier could be operated as a multi-class classifier, we limit ourselves to the case of binary classification discussed above and cast the multi-label tasks as a set of ``one-versus-all'' binary discrimination subtasks. For example, in the case of digit recognition the $i$th subtask would be to distinguish the digit ``$i$'' from all other digits. The results are an average taken over all subtasks. \\

The data is preprocessed for the needs of the quantum algorithm. The MNIST data vectors were course-grained into real-valued data vectors of dimension 256. With the exception of this data set, in all other cases the data vectors have been padded non-informatively so that their dimension after padding is the nearest power of $2$. After padding, each input vector was renormalized to unit norm. Thus each of the $N$-dimensional original data vectors would require $n=\lceil \log_2(N) \rceil$ qubits to encode in amplitude encoding. In our simulations we do not use feature maps which would multiply the input dimensions. However, we expect that the circuit-centric classifier will gain a lot of power from this strategy, which can be tested on real quantum devices without the same amount of overhead.

%In addition to the standard benchmarks, we generate an artificial dataset (ARTIF) in order to investigate the behaviour of the circuit-centric quantum classifier in more detail. \textcolor{red}{[TODO: How was the artificial data generated?]}

\subsubsection{Benchmark models}

For the circuit-centric quantum classifier (QC) we use the data-agnostic entangling circuit of $n$ qubits, which has been explained in Section \ref{Sec:architecture}. We use one, two or three entangling layers, which means that the circuit contains anywhere from $n+1$ to $6\, n +1$ single-qubit and two-qubit gates. Therefore the number of real trainable parameters varies from $3\,n+2$ to $18\,n+2$. For each dataset we selected the circuit architecture with the lowest training error while reducing overfitting.\\

Defining a fair, systematic comparison is not straight forward since there are many different models and training algorithms we could compare with, each being further specified by hyperparameters. Without the use of feature maps, our classifier has only limited flexibility, and benchmarking against state-of-the-art models such as convolutional neural networks will therefore not provide much insight. Instead, we choose to compare our model to six classical benchmark models (see Table \ref{Tbl:models}) that are selected for their mathematical structure which is related to the circuit-centric classifier. \\

Section \ref{Sec:nn} showed an interesting parallel to neural networks, which is why we take neural networks as one benchmark model family. From this family we choose $3$ different architectures shown in Figure \ref{Fig:mlp_architectures}. The MLPlin model has a linear hidden layer of the same dimension $N$ as the input layer and resembles the architecture of our QC model displayed in Figure \ref{Fig:model}. The MLPshal and MLPdeep models have hidden layers of size $\lceil \log_2 N \rceil$. The motivation is to compare the QC with an architecture that - after the input layer of size $N$ - gives rise to a polylogarithmic number of weights. \\

\begin{figure}[t]
\def\arraystretch{1.5}
\centering
\includegraphics[width=0.45\textwidth]{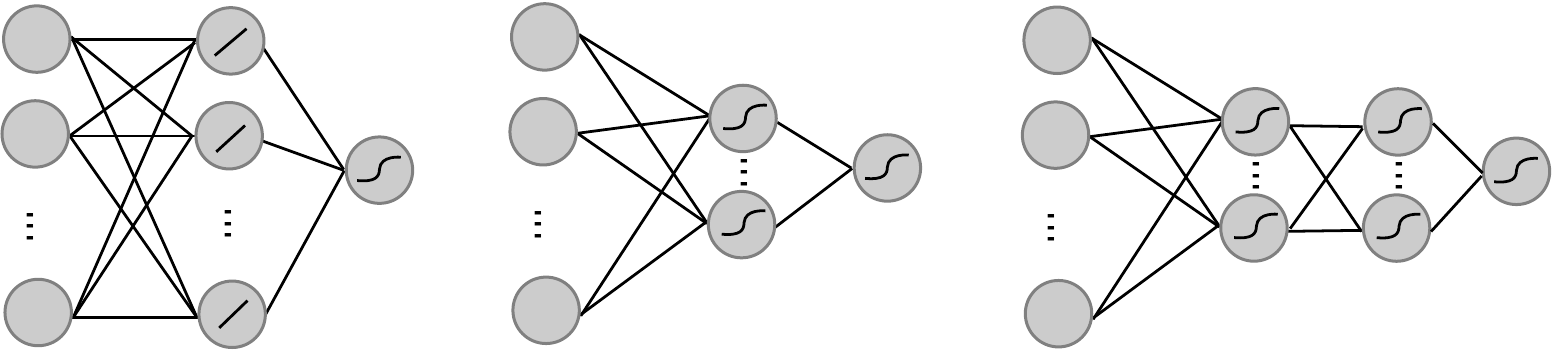} 
\caption{The three architectures of the benchmark artificial neural network models. Left is the MLPlin model, which has a hidden linear layer of the size of the input $N$ and a logistic output layer. The MLPshal model in the middle has a hidden layer of size $\lceil \log_2 N \rceil$ with nonlinear activations and a logistic output layer. The MLPdeep model model on the right adds a second nonlinear hidden layer.  }
\label{Fig:mlp_architectures}
\end{figure}

We use a support vector machine as a second benchmark model. Support vector machines are similar to the circuit-centric classifier since we can also think of them as a linear classifier in a feature space that is defined by a \textit{kernel} $\kappa$ \cite{scholkopf02}. We mentioned in Section \ref{Sec:stateprep} that amplitude encoding can be supplemented by a feature map which is very similar to that of a support vector machine with a polynomial kernel,
\[\kappa(x,x') = (x^T x' +c)^d.\]
The offset $c$ can be loosely compared to the effect of padding. The degree $d$ of the kernel is not one-to-one comparable to the degree of the polynomial feature map in amplitude encoding, since our QC model effectively contains an `extra' nonlinearity which derives from the measurement process (and is therefore not precisely a linear model in feature space). This can be seen in Figure \ref{Fig:2d} where we compare the decision boundaries of an SVM with polynomial degree $d=1,2$ with the QC model and a feature map of degree $d=1,2$. To counterbalance the potential advantage of the QC, we consider two support vector machines, one with $d=1$ (SVMpoly1) and one with $d=2$ (SVMpoly2). The QC model does not use any feature maps. Finally, we add a perceptron (PERC) to the list of benchmark models to get an impression about the linear separability of the datasets.\\

\begin{figure}[t]
\centering
\includegraphics[width=0.47\textwidth]{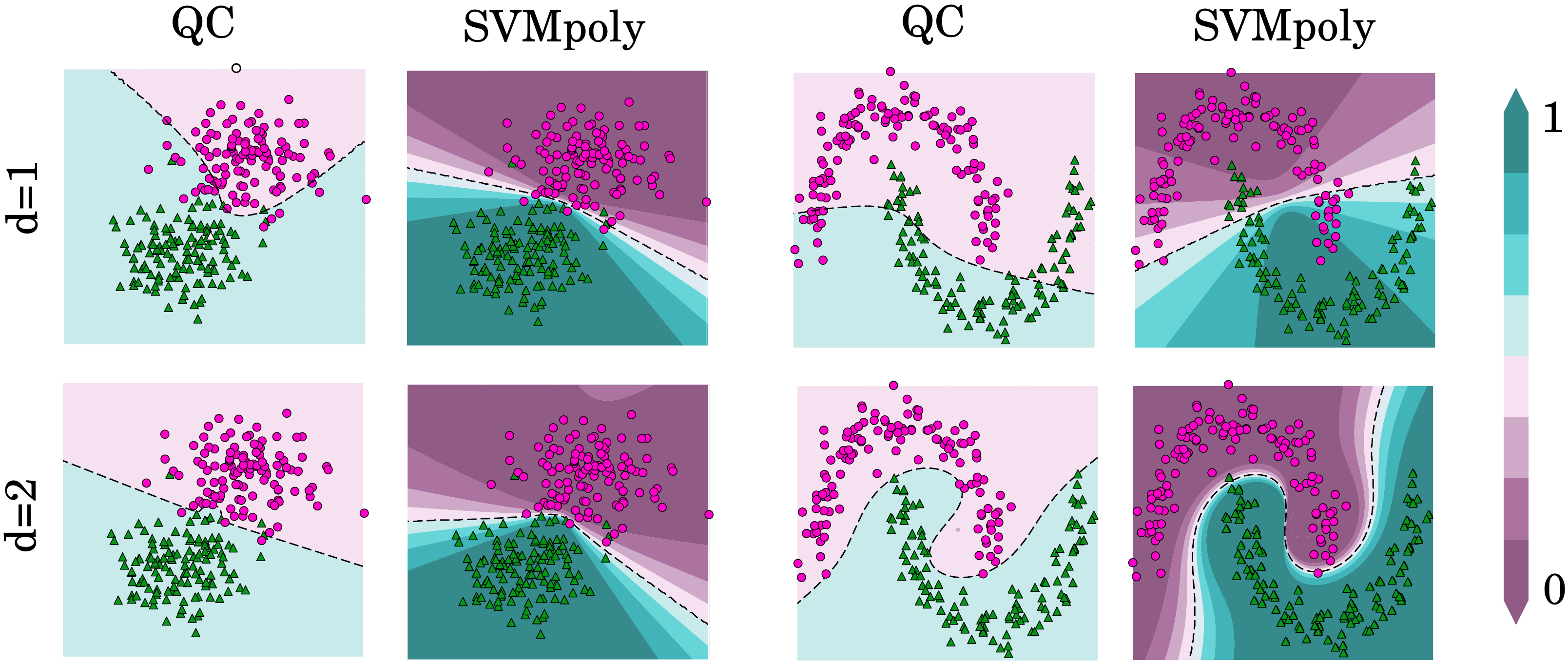}
\caption{Comparison of the decision boundary for the circuit-centric quantum classifier (QC) and a support vector machine  with polynomial kernel (SVMpoly). The $2$-dimensional data gets embedded into a $4$-dimensional feature space (we padded with $2$ non-informative features), where it is classified by the two models. The colour scale indicates the probability that the model predicts class ``green circles''.  For the QC model, the parameter $d$ corresponds to the degree of the polynomial feature map in amplitude encoding (see Section \ref{Sec:stateprep}). For the SVMpoly, $d$ is the degree of the polynomial kernel. One can see that for $d=1$ the QC model is slightly more flexible than the SVMpoly.  }
\label{Fig:2d}
\end{figure}

Since one of our goals is to build a model with a small parameter space, we compare the number of trainable parameters for each model in Table \ref{Tbl:paramcount}. For the MLP and PERC models these parameters are the weights between units, and their number is defined by the dimension of inputs as well as the architecture of the network. For the SVM models we consider the number of input data points, since in their dual formulation these models start with assigning a weight to each input, after which they reduce the training set to a few support vectors used for inference.

\begin{table}[t]
\def\arraystretch{1.5}
\begin{tabular}{p{1.2cm} R{1.cm} R{0.9cm} R{1.2cm} R{1.2cm}  R{1.2cm} R{0.9cm} }%R{1.7cm}}
 \toprule
\footnotesize ID & \footnotesize QC & \footnotesize PERC & \footnotesize  MLPlin &\footnotesize MLPshal & \footnotesize MLPdeep  & \footnotesize SVM\\%& SVMrbf  \\
 \hline
\footnotesize CANCER   & $79$ &$32$ & $1056$ & $165$& $190$&$208$\\% &$208$\\
\footnotesize SONAR    & $60$ &$60$ & $1952$ &$305$&$330$&$569$ \\% &$569$\\
\footnotesize WINE     & $28$&$13$ &$272$ &$51$&$60$ &$178$\\%&$178$\\
\footnotesize SEMEION  & $100$ &$256$ &$65792$ &$2056$ &$2120$ &$1593$\\%&$1593$\\
%SEMEION  & $100$ &$256$ &&&&$1593$&$1593$\\
\footnotesize MNIST256 & $124$&$256$ &65792&2056&2120&$800$\\%&$1570$\\
%MNIST256 & $100$ &$256$ & $65792$ &$2056$& $2120$ &$2766$&$2766$\\
\botrule
\end{tabular}
\caption{Number of parameters each model has to train for the different benchmark datasets. The circuit-centric quantum classifier QC has a logarithmic growth in the number of parameters with the input dimension $N$ and data set size $M$, while all other models show a linear or polynomial parameter growth in either $N$ or $M$.}
\label{Tbl:paramcount}
\end{table}

\subsubsection{Results}

\begin{table*}[t]
\def\arraystretch{1.5}
\begin{tabular}{ p{2cm} R{2.5cm} R{2.5cm} R{2.5cm} R{2.5cm} R{2.5cm}}
\toprule
 & CANCER &  SONAR  & WINE$^*$ & SEMEION$^*$  & MNIST256$^*$ \\  \hline
QC  &  $0.022/\mathbf{0.058}$ & $0.000/0.195$ & $0.000/0.028$ & $0.031/0.031$ & $0.031/0.033$ \\ %[alexeib] 3/19/2018; 1st correction for cross-vld mean;
PERC  &$0.128/0.137$ &$0.283/0.315$& $0.067/0.134$& $0.022/0.038$& $0.065/0.066$\\
%SVMpoly  &$0.030/0.059$ &$0.000/0.187$& $0.018/0.043$& $0.000/0.028$&$0.003/0.023$\\
%SVMrbf  &$0.008/0.060$&$0.000/\mathbf{0.130}$ &$0.000/\mathbf{0.021}$&$0.000/\mathbf{0.012}$&$0.000/\mathbf{0.006}$\\
MLPlin  &$0.060/0.075$ & $0.117/0.263$& $0.001/\mathbf{0.039}$&$0.001/0.025$& $0.038/0.041$\\
MLPshal &$0.064/0.077$& $0.010/\mathbf{0.174}$ &$0.029/0.063$ &$0.002/\mathbf{0.024}$&$0.011/\mathbf{0.018}$\\
MLPdeep  &$0.056/0.076$&$0.001/\mathbf{0.174}$&$0.010/0.063$&$0.001/0.026$&$0.014/0.021$\\
SVMpoly1 &$0.373/0.367$&$0.452/0.477$&$0.430/0.466$& $0.101/0.100$ &$0.092/0.092$\\
SVMpoly2 &$0.169/0.169$&$0.334/0.383$&$0.090/0.099$ & $0.100/0.101$ &$0.091/0.092$\\\hline
Average & $0.125/0.136$ & $0.171/0.283$ & $0.090/0.137$ & $0.037/0.049$ &$0.040/0.043$\\  \botrule
\end{tabular}
\caption{Results of the benchmarking experiments. The cells are of the format 	``training error/validation error''. The variance between the $50$ repetition for each experiment was of the order of $0.01-0.001$ for the training and test error. $^*$For multilabel classification problems with $d$ labels, the average of all $d$ one-versus-all problems train and test errors were taken.  }
\label{Tbl:results}
\end{table*}

For every benchmarking test (except from the QC model on SEMEION and MNIST256) we use $5$-fold crossvalidation with $10$ repetitions each. This means that the results are an average of $50$ repetitions of training the model and calculating the training and test error. Due to the significant cost of quantum circuit simulations, for the QC experiments on the SEMEION and MNIST256 datasets only one repetition of the $5$-fold cross-validation was carried out. The results are summarized in Table \ref{Tbl:results}. \\

As we can read from the non-zero training error of the PERC model, none of the datasets is linearly separable. One finds that the QC model performs significantly better than the SVMpoly1 and SVMpoly2 models across all data sets. In further simulations we verified that for support vector machines with polynomial kernel, degrees of $d=6$ to $d=8$ return the best results on the datasets, which are also better than those of the MLP models.\\

Although showing slightly worse test errors than the MLPshal and MLPdeep (and with mixed success compared to the MLPlin), the QC performs comparable with the MLP models that train a lot more parameters. For the SONAR and WINE dataset we find that the QC model overfits the training data. This is even worse without using the dropout qubit regularization technique explained above. The observation is interesting, since the QC model is `slimmer' than the MLP and SVM models in terms of its parameter count. An open question is therefore how to introduce other means of regularisation.

\section{Conclusion}\label{Sec:concl}

We have developed a machine learning design which is both quantum inspired and implementable by near-term quantum technology. The key building block of this design is a unitary model circuit with relatively few trainable parameters that
assumes amplitude encoding of the data vectors and uses systematically the entangling properties of quantum circuits as a resource for capturing correlations in the data. After state preparation, the prediction of the model is computed by applying only a small number of one- and two-qubit gates quantum gates. At the same time, simulating these gates on a classical computer requires a of elementary operations that scales with the number of features. We are aware of prior designs of unitary neural nets available in literature (cf. \cite{arjovsky15, jing16} and related work). In these designs the number of learnable parameters is proportional to the number of data features, whereas the size of our model circuit therefore scales with the number of qubits and thus allows, qubit-wise, for exponentially fewer learnable parameters than what traditional methods would use. We have also shown in some preliminary simulations that the design can indeed achieve results close to off-the-shelf methods that have comparable limitations but considerably more tunable parameters. \\

The quantum classifiers based on model circuits that we have explored so far belong to a class of weakly nonlinear classifiers. The main source of nonlinearity in our classifiers stems from the concluding measurement and thresholding on the probabilities of the measurement outcomes. These probabilities are roughly quadratic in the amplitudes of the final post-circuit state. Since the overall effect of the model circuit on the amplitudes is linear reversible, one concludes that the models we have experimented with can capture (amplitude-wise) quadratic separation of classes in the original feature space. \\

There is a potential for tracing class separation boundaries of higher polynomial degrees by encoding several copies of a classical data vector in disjoint quantum registers and then having the model circuit work on the tensor power of the data vector. Of course, this setup requires vastly more computational resources to simulate the quantum circuit, and we leave such experiments for the future. We furthermore expect that the most beneficial application of our quantum classifiers is to quantum data. One can conceive entangling a quantum system with a classifier circuit and use the latter to discriminate between various states of the quantum system.\\

This work contributes to the growing literature on variational circuits for machine learning applications in proposing a specific circuit architecture and parametrisation, a dropout regularisation technique, a hybrid training method, as well as a graphical interpretation of quantum operations in the language of neural networks. However, an overwhelming number of questions is still largely unexplored. For example, we noticed that our slim circuit design still suffers from overfitting. Also, full-fledged numerical benchmarks on larger datasets are needed to systematically analyse the effect of logarithmically few parameters with growing input spaces. The power of feature maps to introduce nonlinearities is another open question. Further numerical as well as theoretical studies are therefore crucial to understand the convergence properties and representational power of circuit-centric models for classification.

%[TODO:REDO] We have evaluated our proposed models and training strategies on a usual slate of UCI machine learning benchmarks, available from the UCI database of classical machine learning data. In a nutshell, the classification quality of our classifiers has been better than that of purely linear neural nets, on a par with polynomial-kernel SVMs, slightly worse than that of traditional multilayer perceptrons and yielding to both SVMs with RBF kernels and Deep Neural Nets. One must note, however, that in most cases reasonable classification quality has been achieved by our models with the parameter count that has been dramatically smaller than the parameter count in the off-the-shelf comparison models. An important direction of future research would be to find a framework for building competitive classifiers that would use our circuit-centric models as building blocks.

\acknowledgements
The Authors are grateful to Jeongwan Haah for insightful remarks and wish to thank Martin Roetteler for numerous useful discussions of the ongoing research and early drafts of this paper. MS wishes to thank the entire QuArC group at Microsoft Research, Redmond, whose collective help during her research internship had been generous, professional and very welcoming.

\bibliography{litArchive}
\bibliographystyle{unsrt}

\end{document}